\documentclass[12pt]{iopart}
\usepackage{amsfonts}
\usepackage{geometry}
\usepackage{graphicx}

\newtheorem{theorem}{Theorem}
\newtheorem{axiom}{Axiom}
\newtheorem{definition}{Definition}
\newtheorem{corollary}{Corollary}
\newtheorem{remark}{Remark}
\newenvironment{proof}[1][Proof]{\noindent\textbf{#1.} }{\ \rule{0.5em}{0.5em}}
\newenvironment{observation}[1][Commentary]{\noindent\textbf{#1.} }{\ \rule{0.5em}{0.5em}}
\newtheorem{example}{Example}

\begin{document}

\title{Riemannian thermo-statistics geometry}
\author{L. Velazquez}

\begin{abstract}
It is developed a Riemannian reformulation of classical statistical
mechanics for systems in thermodynamic equilibrium, which arises as
a natural extension of Ruppeiner geometry of thermodynamics. The
present proposal leads to interpret entropy
$\mathcal{S}_{g}(I|\theta)$ and all its associated
thermo-statistical quantities as purely geometric notions derived
from the Riemannian structure on the manifold of macroscopic
observables $\mathcal{M}_{\theta}$ (existence of a distance
$ds^{2}=g_{ij}(I|\theta)dI^{i}dI^{j}$ between macroscopic
configurations $I$ and $I+dI$). Moreover, the concept of statistical
curvature scalar $R(I|\theta)$ arises as an invariant measure to
characterize the existence of an \textit{irreducible statistical
dependence} among the macroscopic observables $I$ for a given value
of control parameters $\theta$. This feature evidences a certain
analogy with Einstein General Relativity, where the spacetime
curvature $R(\mathbf{r},t)$ distinguishes the geometric nature of
gravitation and the reducible character inertial forces with an
appropriate selection of the reference frame.
\newline
\newline
PACS numbers: 05.20.Gg; 02.40.Ky\newline
\end{abstract}
\address{Departamento de F\'\i sica, Universidad Cat\'olica del Norte, Av. Angamos 0610, Antofagasta,
Chile.} \tableofcontents

\section{Introduction}

Let be a classical equilibrium distribution function:
\begin{equation}  \label{dist}
dp(I|\theta)=\rho(I|\theta)dI,
\end{equation}
where the stochastic variables $I=\left\{I^{i}\right\}$ denote the
relevant macroscopic observables of the system under analysis, while
$\theta=\left\{\theta^{\alpha}\right\}$ the underlying control
parameters of a given equilibrium situation. It is possible to
introduce \textit{a distance notion} between two close equilibrium
situations $\theta$ and $\theta+d\theta$:
\begin{equation}
ds^{2}_{F}=g_{\alpha\beta}(\theta)d\theta^{\alpha}d\theta^{\beta}
\end{equation}
in terms of the so-called Fisher's information matrix \cite{Fisher}:
\begin{equation}
g_{\alpha\beta}(\theta)=\int_{\mathcal{M}_{\theta}} \frac{%
\partial\log\rho(I|\theta)}{\partial\theta^{\alpha}} \frac{%
\partial\log\rho(I|\theta)}{\partial\theta^{\beta}}\rho(I|\theta)dI.
\end{equation}
The existence of this type of Riemannian formulation was pioneering
proposed by Rao \cite{Rao}, which is referred to as
\textit{inference geometry} in the literature \cite{Amari}.
Alternately, it is also possible to introduce a distance notion
between two close macroscopic configurations $I$ and $I+dI$ for a
given value of control parameter $\theta$:
\begin{equation}\label{dist2}
ds^{2}=g_{ij}(I|\theta)dI^{i}dI^{j}
\end{equation}
starting from the same distribution function (\ref{dist}). This
latter geometric characterization appears as a suitable extension of
Ruppeiner geometry of thermodynamics \cite{Ruppeiner1} to the
framework of equilibrium classical statistical mechanics. For this
reason, the present proposal shall be hereafter referred to as
\textit{Riemannian thermo-statistics geometry}. The main interest of
this work is to present a systematic development of the most
relevant features of this geometric formulation starting from an
\textit{axiomatic perspective}. Such a procedure allows to show that
fundamental concepts of statistical mechanics can be suitably
rephrased in terms of geometry notions, which provides a general
framework to apply powerful methods of Riemannian geometry for the
analysis of properties of thermodynamical systems.

\section{Riemannian thermo-statistics geometry}

Let us denote by $\mathcal{M}$ the abstract manifold composed of all
admissible values of macroscopic observables $I$, as well as by
$\mathcal{P}$ the abstract manifold composed of all admissible
values of control parameters $\theta$. Besides, let us also denote
by $\mathcal{M}_{\theta}$ the sub-manifold of $\mathcal{M}$ composed
of all values of macroscopic observables $I$ that are accessible for
a given value $\theta$. In general, it is possible to consider two
different types of \textit{coordinate reparametrizations}: (1) the
coordinate reparametrizations $\Theta(I):
\mathcal{R}_{I}\rightarrow\mathcal{R}_{\Theta}$ of the manifold of
macroscopic observables $\mathcal{M}_{\theta}$, as well as (2) the
coordinate reparametrizations
$\nu(\theta):\mathcal{R}_{\theta}\rightarrow\mathcal{R} _{\nu}$ of
the manifold of control parameters $\mathcal{P}$.

\subsection{Postulates of thermo-statistics geometry}

\begin{axiom}
The manifold of the system macroscopic observables
$\mathcal{M}_{\theta}$ possesses a \textbf{Riemannian structure},
that is, it is provided of a \textbf{metric tensor} $g_{ij}\left(
I|\theta \right)$ and a \textbf{torsionless  covariant
differentiation} $D_{i}$ that obeys the following constraints:
\begin{equation}
D_{k}g_{ij}\left( I|\theta \right) =0.  \label{Dg}
\end{equation}
\end{axiom}
\begin{definition}
The Riemannian structure on the manifold of macroscopic observables
$\mathcal{M}_{\theta}$ allows to introduce the \textbf{invariant
volume element} as follows:
\begin{equation}  \label{inv.volume}
d\mu(I|\theta)=\sqrt{\left\vert \frac{g_{ij}\left( I|\theta \right) }{2\pi }%
\right\vert }dI,
\end{equation}
where $\left\vert g_{ij}\left( I|\theta \right)\right\vert $ denotes
the absolute value of the metric tensor determinant.
\end{definition}
\begin{axiom}
There exist a differentiable scalar function $\mathcal{S}_{g}\left(
I|\theta \right) $ defined on the manifold $\mathcal{M}_{\theta }$,
hereafter referred to as the \textbf{scalar entropy}, whose
knowledge determines the equilibrium distribution function $dp\left(
I|\theta \right)$ of the macroscopic observables
$I\in\mathcal{M}_{\theta}$ as follows:
\begin{equation}
dp\left( I|\theta \right) =\exp \left[ \mathcal{S}_{g}\left(
I|\theta \right) \right] d\mu(I|\theta).  \label{EinsteinPostulate}
\end{equation}
\end{axiom}
\begin{definition}
Let us consider an arbitrary curve given in parametric form
$I(t)\in\mathcal{M}_{\theta}$ with fixed extreme points $I(t_{1})=P$
and $I(t_{2})=Q$. Adopting the following notation:
\begin{equation}
\dot{I}^{i}=\frac{d I^{i}(t)}{dt},
\end{equation}
the \textbf{length} $\Delta s $ of this curve can be expressed as:
\begin{equation}\label{distance}
\Delta s=\int_{t_{1}}^{t_{2}}\sqrt{g_{ij}\left[ I(t)|\theta \right] \dot{I}%
^{i}\left( t\right) \dot{I}^{j}\left( t\right) }dt.
\end{equation}
\end{definition}
\begin{definition}
The curve $I(t)\in\mathcal{M}_{\theta}$ exhibits an \textbf{unitary
affine parametrization} when its parameter $t$ satisfies the
following constraint:
\begin{equation}\label{affine}
g_{ij}(I|\theta)\dot{I}^{i}(t)\dot{I}^{j}(t)=1.
\end{equation}
\end{definition}
\begin{definition}
A \textbf{geodesic} is the curve $I_{g}\left( t\right) $ with
minimal length (\ref{distance}) between two fixed arbitrary points
$(P,Q)\in\mathcal{M}_{\theta}$. Moreover, the \textbf{distance}
$D(P,Q|\theta)$ between these two points $(P,Q)$ is given by the
length of its associated geodesic $I_{g}(t)$:
\begin{equation}\label{distance}
D(P,Q|\theta)=\int_{t_{1}}^{t_{2}}\sqrt{g_{ij}\left[ I_{g}(t)|\theta \right] \dot{I}_{g}%
^{i}\left( t\right) \dot{I}_{g}^{j}\left( t\right) }dt.
\end{equation}
\end{definition}
\begin{definition}
Considering a differentiable curve $I(t)\in\mathcal{M}_{\theta}$
with an unitary affine parametrization, the \textbf{entropy
production} along this curve $\Phi(t)$ is given by:
\begin{equation}\label{rate}
\Phi(t)=\frac{d\mathcal{S}_{g}\left[I(t)|\theta\right]}{dt}.
\end{equation}
\end{definition}
\begin{axiom}
The length $\Delta s$ of any interval $(t_{1},t_{2})$ of an
arbitrary \textit{geodesic} $I_{g}(t)\in\mathcal{M}_{\theta}$ with
an unitary affine parametrization is given by the variation of its
entropy production $\Delta\Phi(t)$ with opposite sight:
\begin{equation}\label{metric}
\Delta s=-\Delta\Phi(t)=\Phi(t_{1})-\Phi(t_{2}).
\end{equation}
\end{axiom}

\begin{axiom}
The probability density $\rho(I|\theta)$ associated with
distribution function (\ref{EinsteinPostulate}) vanishes with its
first partial derivatives for any point on the boundary
$\partial\mathcal{M}_{\theta}$ of the manifold
$\mathcal{M}_{\theta}$.
\end{axiom}

\subsection{Interpretations and fundamental consequences}
\begin{observation}
\textbf{Axiom 1} allows to precise the Riemannian structure of the
manifold $\mathcal{M}_{\theta}$ starting from the knowledge of the
metric tensor $ g_{ij}(I|\theta)$, specifically, the covariant
differentiation $D_{i}$ and the curvature tensor
$R_{ijkl}(I|\theta)$. As discussed elsewhere \cite{Berger},
Eq.(\ref{Dg}) is an strong constraint of Riemannian geometry that
determines the \textit{affine connections} $\Gamma _{ij}^{k}$
employed to introduce the covariant differentiation $D_{i}$,
specifically, the so-called \textit{Levi-Civita connection}:
\begin{equation}
\Gamma _{ij}^{k}\left( I|\theta \right) =g^{km}\frac{1}{2}\left( \frac{%
\partial g_{im}}{\partial I^{j}}+\frac{\partial g_{jm}}{\partial I^{i}}-%
\frac{\partial g_{ij}}{\partial I^{m}}\right).  \label{Levi-Civita}
\end{equation}
The knowledge of the affine connections $\Gamma _{ij}^{k}$ allows
the introduction of the \textit{curvature tensor} $
R^{l}_{ijk}=R^{l}_{ijk}(I|\theta)$ of the manifold
$\mathcal{M}_{\theta}$:
\begin{equation}\label{curvature}
R^{l}_{ijk}=\frac{\partial}{\partial I^{i}}\Gamma^{l}_{jk}-\frac{\partial}{%
\partial I^{j}}\Gamma^{l}_{ik}+\Gamma^{l}_{im}\Gamma^{m}_{jk}-
\Gamma^{l}_{jm}\Gamma^{m}_{ik},
\end{equation}
which is also determined from the knowledge of the metric tensor $
g_{ij}(I|\theta)$ and its first and second partial derivatives.
Using the curvature tensor $R^{l}_{ijk}(I|\theta)$, it is possible
to obtain its fourth-rank covariant form
$R_{ijkl}(I|\theta)=g_{lm}(I|\theta)R^{m}_{ijk}(I|\theta)$:
\begin{eqnarray}\label{curvature.2}\nonumber
&&R_{ijkl}=\frac{1}{2}\left(\frac{\partial^{2}g_{il}}{\partial
I^{j}\partial I^{k}}+\frac{\partial^{2}g_{jk}}{\partial
I^{i}\partial I^{l}}-\frac{\partial^{2}g_{jl}}{\partial
I^{i}\partial I^{k}}-\frac{\partial^{2}g_{ik}}{\partial
I^{j}\partial I^{l}}\right)+\\
&&+g_{mn}\left(\Gamma^{m}_{il}\Gamma^{n}_{jk}-\Gamma^{m}_{jl}\Gamma^{n}_{ik}\right),
\end{eqnarray}
the \textit{Ricci curvature tensor} $R_{ij}(I|\theta)$:
\begin{equation}
R_{ij}(I|\theta)=R_{kij}^{k}(I|\theta)
\end{equation}
as well as the \textit{curvature scalar} $R(I|\theta)$:
\begin{equation} \label{scalar.curvature}
R(I|\theta)=g^{ij}(I|\theta)R^{k}_{kij}(I|\theta)=g^{ij}(I|\theta)g^{kl}(I|\theta)R_{kijl}(I|\theta).
\end{equation}
The curvature scalar $R(I|\theta)$ has a paramount relevance in
Riemannian geometry \cite{Berger} because of it is the only
invariant derived from the first and second partial derivatives of
the metric tensor $g_{ij}(I|\theta)$.
\end{observation}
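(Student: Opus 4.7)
The plan is to show that, among all scalar functions built from the metric $g_{ij}(I|\theta)$ together with its first and second partial derivatives, the only one (up to an overall multiplicative constant) that is invariant under arbitrary coordinate reparametrizations $\Theta(I)$ and depends linearly on the second derivatives is the scalar curvature $R(I|\theta)$ defined in (\ref{scalar.curvature}). My strategy is the standard Weyl--Cartan one: reduce the claim to a pointwise algebraic question using Riemann normal coordinates, then classify linear contractions of the Riemann tensor under the residual orthogonal symmetry.

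First I would fix an arbitrary point $I_{0}\in\mathcal{M}_{\theta}$ and introduce Riemann normal coordinates centered there, so that $g_{ij}(I_{0}|\theta)=\delta_{ij}$ and $\partial_{k}g_{ij}(I_{0}|\theta)=0$. In these coordinates the Christoffel symbols (\ref{Levi-Civita}) vanish at $I_{0}$, and the second derivatives $\partial_{k}\partial_{l}g_{ij}(I_{0}|\theta)$ are linearly related to the Riemann tensor components $R_{ijkl}(I_{0}|\theta)$ through (\ref{curvature.2}). The residual coordinate freedom at $I_{0}$ is the orthogonal group acting on the tangent space, so any invariant scalar linear in $\partial^{2}g$ must be a linear $O(n)$-invariant of the tensor $R_{ijkl}$.

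Next I would enumerate such contractions. The Riemann tensor satisfies $R_{ijkl}=-R_{jikl}=-R_{ijlk}=R_{klij}$ together with the first Bianchi identity $R_{ijkl}+R_{iklj}+R_{iljk}=0$. A full scalar contraction pairs two factors of $\delta^{ij}$ with the four indices of $R_{ijkl}$; pairing indices within an antisymmetric pair annihilates the expression, and the only surviving independent combination is $g^{ik}g^{jl}R_{ijkl}=R$, with the Bianchi identity ruling out the other a priori candidate. Because both the extraction of $R_{ijkl}$ from $\partial^{2}g$ in normal coordinates and the subsequent contraction are covariant operations, the identity transports back to arbitrary coordinates and gives the claim.

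The hard part, in my view, is not the combinatorial enumeration but justifying rigorously the passage from ``any'' coordinate-invariant scalar built from the $2$-jet of $g_{ij}$ to an $O(n)$-invariant polynomial in the curvature tensor. A priori such an invariant could combine first and second partial derivatives of $g_{ij}$ in subtle ways that are not obviously channeled through $R_{ijkl}$. The clean justification rests on the classical invariant theory of the general linear group acting on jets of the metric, due to Weyl and Cartan: in normal coordinates the first-derivative part of any smooth scalar invariant is forced to vanish, and the entire linear second-derivative dependence must factor through the Riemann tensor. Invoking (or independently verifying by direct manipulation of the tensorial transformation laws for $g_{ij}$, $\partial_{k}g_{ij}$ and $\partial_{k}\partial_{l}g_{ij}$) this reduction is the technical bottleneck; once it is in place, the enumeration above closes the argument.
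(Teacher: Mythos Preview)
The paper does not prove this statement. The \emph{observation} environment here is purely expository: it lists the standard Levi--Civita, Riemann, Ricci, and scalar-curvature formulas and then simply asserts the uniqueness claim in the final sentence with a citation to Berger. There is no argument in the paper to compare against.

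Your sketch targets that final sentence and is a correct outline of the classical Vermeil--Weyl--Cartan argument: pass to Riemann normal coordinates so that the $2$-jet of $g$ at the chosen point reduces to $(\delta_{ij},0,\partial\partial g)$ with $\partial_{k}\partial_{l}g_{ij}=-\tfrac{1}{3}(R_{ikjl}+R_{iljk})$, and then enumerate $O(n)$-invariant linear functionals of $R_{ijkl}$, of which the full trace $g^{ik}g^{jl}R_{ijkl}$ is the only nonzero one. Your identification of the ``hard part'' --- that an arbitrary coordinate-invariant scalar of the $2$-jet of $g$ must factor through the curvature tensor in normal coordinates --- is accurate, and you have located it honestly rather than glossed over it. So your proposal goes well beyond what the paper itself supplies.

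One remark worth making explicit: you insert the hypothesis that the invariant depends \emph{linearly} on the second derivatives, which the paper's wording does not state. Without some such restriction the uniqueness claim is false, since $R^{2}$, $R_{ij}R^{ij}$, $R_{ijkl}R^{ijkl}$, and so on are also invariants built from the $2$-jet of $g$. Your added hypothesis is therefore necessary, and the paper's phrasing should be read informally as ``the only invariant of differential order exactly two,'' under which reading your normal-coordinate reduction still applies verbatim.
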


\begin{observation}
\textbf{Axiom 2} is a \textit{covariant redefinition of Einstein
postulate} of classical fluctuation theory \cite{landau}:
\begin{equation}\label{EP}
dp_{EP}(I|\theta)=\mathcal{A}\exp\left[S(I|\theta)\right]dI,
\end{equation}
which rephrases Boltzmann entropy $S=\log W$ to assign
\textit{relative probabilities} from the entropy $S(I|\theta)$.
However, expression (\ref{EP}) has the disadvantage that the entropy
$S(I|\theta)$, commonly referred to as the \textit{coarsed grained
entropy}, does not correspond to a \textit{scalar function} within a
geometric theory. In fact, coarsed grained entropy $S(I|\theta)$
behaves under coordinate reparametrizations
$\Theta(I):\mathcal{R}_{I}\rightarrow\mathcal{R}_{\Theta}$ as:
\begin{equation}
S(\Theta|\theta)=S(I|\theta)-\log\left|\frac{\partial\Theta}{\partial
I}\right|.
\end{equation}
The non-scalar character of the entropy $S\left( I|\theta \right)$
is inconsistent with the physical idea that \textit{thermodynamic
entropy is a state function}, which should not depend on the
particular coordinate representation employed to describe the
macroscopic properties of a thermodynamic system. One can avoid this
mathematical inconsistence replacing the usual volume element $dI$
by the invariant volume element (\ref{inv.volume}). Thus, the scalar
character of the entropy $ \mathcal{S}_{g}\left( I|\theta \right)$:
\begin{equation}
\mathcal{S}_{g}\left( \Theta|\theta \right)=\mathcal{S}_{g}\left(
I|\theta \right)
\end{equation}
and the covariance of the metric tensor $g_{ij}\left( I|\theta
\right)$:
\begin{equation}
g_{ij}\left( \Theta|\theta \right)=\frac{\partial I^{m}}{\partial\Theta^{i}}%
\frac{\partial I^{n}}{\partial\Theta^{j}}g_{mn}\left( I|\theta
\right)
\end{equation} guarantee the
invariance of the equilibrium distribution function
(\ref{EinsteinPostulate}) under coordinate reparametrizations
$\Theta(I):\mathcal{R}_{I}\rightarrow \mathcal{R}_{\Theta}$.
\end{observation}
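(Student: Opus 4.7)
The plan is to verify the final claim of Observation~2---the invariance of the distribution (\ref{EinsteinPostulate}) under a coordinate reparametrization $\Theta(I):\mathcal{R}_{I}\to\mathcal{R}_{\Theta}$---by tracking how the two factors $\exp[\mathcal{S}_{g}]$ and $d\mu$ transform separately and then checking that the resulting Jacobian-related corrections cancel exactly.

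First I would dispose of the exponential factor: by the scalar character of $\mathcal{S}_{g}$ asserted just above, $\exp[\mathcal{S}_{g}(\Theta|\theta)] = \exp[\mathcal{S}_{g}(I|\theta)]$ pointwise, so the entire coordinate dependence of the problem is concentrated in the invariant volume element (\ref{inv.volume}). Writing $J^{i}{}_{j} = \partial\Theta^{i}/\partial I^{j}$ and $J = |\det(\partial\Theta/\partial I)|$, the ordinary Lebesgue volumes transform as $d\Theta = J\, dI$. The covariance law for $g_{ij}$ displayed immediately above is equivalent to the matrix identity $g(\Theta) = (\partial I/\partial\Theta)^{\mathrm{T}}\, g(I)\, (\partial I/\partial\Theta)$, so taking determinants yields $|\det g(\Theta)| = J^{-2}|\det g(I)|$ and hence $\sqrt{|g(\Theta)/(2\pi)|} = J^{-1}\sqrt{|g(I)/(2\pi)|}$. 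The two factors of $J$ cancel, giving $d\mu(\Theta|\theta) = d\mu(I|\theta)$, and multiplying by the invariant exponential completes the verification.

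The main obstacle, modest as it is, is the bookkeeping with absolute values: the invariant volume (\ref{inv.volume}) uses $|\det g|$, and $d\Theta = |\det J|\, dI$ also carries an absolute value, so one must check that these two absolute values are compatible rather than producing a sign mismatch after the cancellation. I would remove this ambiguity by making explicit the standing hypothesis that $\Theta(I)$ is a diffeomorphism on the admissible patch, so that $\det J$ has fixed sign and never vanishes; the sign then drops out of the identity $|\det g(\Theta)| = (\det J)^{-2}|\det g(I)|$ automatically. Beyond that technicality, the calculation is routine, and the reason it deserves to be recorded is precisely that it exhibits why replacing $dI$ by $d\mu$ in Einstein's postulate (\ref{EP}) cures the non-scalar behaviour of the coarse-grained entropy $S(I|\theta)$ flagged earlier in the same observation.
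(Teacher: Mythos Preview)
Your proposal is correct and follows exactly the route the paper indicates: the observation itself is a commentary rather than a theorem--proof pair, and it simply asserts that the scalar nature of $\mathcal{S}_{g}$ together with the tensorial transformation law of $g_{ij}$ guarantee invariance of (\ref{EinsteinPostulate}), without writing out the Jacobian cancellation in $d\mu$. You have supplied precisely that missing computation---$\sqrt{|g(\Theta)|}\,d\Theta = J^{-1}\sqrt{|g(I)|}\cdot J\,dI$---so your argument is a faithful (and slightly more explicit) rendering of the paper's own reasoning.
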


\begin{remark}[Alternative \textbf{Axiom 2}]
Starting from the knowledge of probability density $\rho \left(
I|\theta \right)$ of the distribution function (\ref{dist}) and the
existence of an everywhere non-vanishing metric determinant
$\left|g_{ij}\left( I|\theta \right)\right| $, it is possible to
introduce the scalar entropy $\mathcal{S}_{g}\left( I|\theta \right)
$ as follows:
\begin{equation}
\mathcal{S}_{g}\left( I|\theta \right) \equiv \log \frac{\rho \left(
I|\theta \right) }{\sqrt{\left\vert g_{ij}\left( I|\theta \right)
/2\pi \right\vert }}.  \label{ScalarEntropy}
\end{equation}
\end{remark}

\begin{observation}
\textbf{Axiom 3} states a direct relation between the distance
notion (\ref{dist2}) and the entropy production (\ref{rate}), which
relates the metric tensor $g_{ij}(I|\theta)$ and the scalar entropy
$\mathcal{S}_{g}(I|\theta)$. This postulate can be regarded as a
generalization of the Ruppeiner postulate \cite{Ruppeiner1}:
\begin{equation}
g_{ij}\left( \bar{I}|\theta \right) =-\frac{\partial
^{2}\mathcal{S}_{g}\left( \bar{I}|\theta \right) }{\partial
I^{i}\partial I^{k}} \label{RuppeinerAnsatz}
\end{equation}
which provides a metric tensor for thermodynamics, with $\bar{I}$
being the point with global maximum entropy.
\end{observation}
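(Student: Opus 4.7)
The plan is to show that Axiom 3 enforces a pointwise tensorial identity between the metric and the covariant Hessian of $\mathcal{S}_{g}$ that reduces to (\ref{RuppeinerAnsatz}) at the global entropy maximum. The essential observation is that, although the axiom is stated as an integral equality, it can be localized by differentiating with respect to the geodesic parameter.

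First I would localize (\ref{metric}). Under unit affine parametrization $\Delta s=t_{2}-t_{1}$, so (\ref{metric}) becomes $\Phi(t_{2})-\Phi(t_{1})=-(t_{2}-t_{1})$ along every geodesic segment. Differentiating in $t$ gives $d\Phi/dt\equiv -1$, i.e., by (\ref{rate}), $d^{2}\mathcal{S}_{g}/dt^{2}=-1$ along every unit-affinely-parametrized geodesic. The chain rule combined with the geodesic equation $\ddot{I}^{k}+\Gamma_{ij}^{k}\dot{I}^{i}\dot{I}^{j}=0$ then yields
\[
\frac{d^{2}\mathcal{S}_{g}}{dt^{2}}=\left(\frac{\partial^{2}\mathcal{S}_{g}}{\partial I^{i}\partial I^{j}}-\Gamma_{ij}^{k}\frac{\partial \mathcal{S}_{g}}{\partial I^{k}}\right)\dot{I}^{i}\dot{I}^{j}=(D_{i}D_{j}\mathcal{S}_{g})\,\dot{I}^{i}\dot{I}^{j},
\]
so the axiom rewrites as $(D_{i}D_{j}\mathcal{S}_{g}+g_{ij})\dot{I}^{i}\dot{I}^{j}=0$ whenever $g_{ij}\dot{I}^{i}\dot{I}^{j}=1$.

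Next I would remove the direction constraint. Since through each $I\in\mathcal{M}_{\theta}$ a geodesic may be launched in any direction (standard local existence of geodesics, i.e.\ the exponential map is a local diffeomorphism), rescaling any nonzero $v\in T_{I}\mathcal{M}_{\theta}$ to $v/\sqrt{g_{kl}v^{k}v^{l}}$ propagates the identity to all tangent vectors; polarization of the resulting symmetric quadratic form then upgrades it to the pointwise tensorial equality
\[
D_{i}D_{j}\mathcal{S}_{g}(I|\theta)=-g_{ij}(I|\theta)
\]
throughout $\mathcal{M}_{\theta}$. Finally I would specialize at $\bar{I}$: as a global maximum of $\mathcal{S}_{g}$, $\partial_{k}\mathcal{S}_{g}(\bar{I}|\theta)=0$, so the Christoffel contribution to $D_{i}D_{j}\mathcal{S}_{g}$ vanishes and the identity collapses to $\partial_{i}\partial_{j}\mathcal{S}_{g}(\bar{I}|\theta)=-g_{ij}(\bar{I}|\theta)$, which is exactly (\ref{RuppeinerAnsatz}).

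The step I expect to be the main obstacle is the passage from a one-dimensional statement along individual geodesics to a full tensorial identity. Axiom 3 a priori delivers information only along tangent vectors of unit norm, so one must argue that this restricted information nevertheless pins down the symmetric tensor $D_{i}D_{j}\mathcal{S}_{g}+g_{ij}$. For a positive-definite $g$ this is benign---rescaling reaches every nonzero vector, and a symmetric bilinear form that vanishes on the unit sphere vanishes identically---but any extension of the framework to indefinite metrics, which can arise near thermodynamic critical points where $g$ degenerates, would require revisiting precisely this point.
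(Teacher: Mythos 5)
Your argument is correct and follows essentially the same route the paper itself takes (its Theorem~1 and the first part of Corollary~2): localize Axiom~3 along unit-affinely parametrized geodesics to get $d^{2}\mathcal{S}_{g}/ds^{2}=-1$, use the geodesic equation to identify the covariant Hessian, conclude $(g_{ij}+D_{i}D_{j}\mathcal{S}_{g})\dot{I}^{i}\dot{I}^{j}=0$ and hence the tensorial identity, then note that at $\bar{I}$ the gradient term vanishes so the covariant Hessian reduces to the ordinary one, recovering (\ref{RuppeinerAnsatz}). Your explicit rescaling-and-polarization step merely fills in what the paper compresses into ``its covariant character leads to Eq.~(\ref{cov.EH})''.
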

\begin{theorem}
\textit{The scalar entropy} $\mathcal{S}_{g}(I|\theta)$ \textit{is
locally concave everywhere} and the metric tensor $g_{ij}(I|\theta)$
is \textit{positive definite} on the manifold
$\mathcal{M}_{\theta}$. Moreover, the metric tensor
$g_{ij}(I|\theta)$ can be identified with the \textbf{covariant
Hessian} $\mathcal{H}_{ij}(I|\theta)$ of the scalar entropy
$\mathcal{S}_{g}\left( I|\theta \right)$ with opposite sign:
\begin{equation}
g_{ij}\left( I|\theta \right)
=-\mathcal{H}_{ij}(I|\theta)=-D_{i}D_{j}\mathcal{S}_{g}\left(
I|\theta \right) . \label{cov.EH}
\end{equation}
\end{theorem}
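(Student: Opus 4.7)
The plan is to extract a pointwise identity from \textbf{Axiom 3} by differentiating twice along a geodesic with unit affine parametrization, recognise the result as a covariant Hessian, and then upgrade the resulting scalar relation to a full tensor identity.

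First I would apply \textbf{Axiom 3} to an arbitrary subinterval $(t_{1},t_{2})$ of such a geodesic $I_{g}(t)\in\mathcal{M}_{\theta}$, which reads
\[
t_{2}-t_{1}\;=\;\int_{t_{1}}^{t_{2}}\!\!\sqrt{g_{ij}\dot{I}_{g}^{i}\dot{I}_{g}^{j}}\,dt\;=\;-\int_{t_{1}}^{t_{2}}\!\!\frac{d\Phi}{dt}\,dt.
\]
Shrinking the interval forces $d\Phi/dt=-1$ pointwise along every unit-speed geodesic. On the other hand, since $\Phi=d\mathcal{S}_{g}/dt=(\partial_{k}\mathcal{S}_{g})\dot{I}^{k}$, a direct calculation gives
\[
\frac{d\Phi}{dt}\;=\;(\partial_{i}\partial_{j}\mathcal{S}_{g})\dot{I}^{i}\dot{I}^{j}+(\partial_{k}\mathcal{S}_{g})\ddot{I}^{k}.
\]
Substituting the geodesic equation $\ddot{I}^{k}=-\Gamma^{k}_{ij}\dot{I}^{i}\dot{I}^{j}$ and recognising the covariant Hessian of a scalar, $D_{i}D_{j}f=\partial_{i}\partial_{j}f-\Gamma^{k}_{ij}\partial_{k}f$ (well-defined and symmetric because the Levi--Civita connection postulated in \textbf{Axiom 1} is torsion-free), collapses the expression to $(D_{i}D_{j}\mathcal{S}_{g})\dot{I}^{i}\dot{I}^{j}$.

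Combining this with the unit-speed constraint $g_{ij}\dot{I}^{i}\dot{I}^{j}=1$ from \textbf{Definition 4} yields, at every point and for every admissible unit initial tangent vector,
\[
\bigl(g_{ij}+D_{i}D_{j}\mathcal{S}_{g}\bigr)\dot{I}^{i}\dot{I}^{j}\;=\;0.
\]
Standard ODE theory provides a geodesic through any $P\in\mathcal{M}_{\theta}$ with any prescribed unit initial velocity, so the symmetric tensor $T_{ij}:=g_{ij}+D_{i}D_{j}\mathcal{S}_{g}$ annihilates the entire unit $g$-sphere in $T_{P}\mathcal{M}_{\theta}$. Positive definiteness of $g$, implicit in the Riemannian structure of \textbf{Axiom 1}, lets any nonzero vector be rescaled to unit length, whence $T(v,v)=0$ for every $v\in T_{P}\mathcal{M}_{\theta}$; polarisation of the symmetric form $T$ then forces $T_{ij}\equiv 0$, giving the claimed identity $g_{ij}=-D_{i}D_{j}\mathcal{S}_{g}$.

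The remaining assertions follow at once: positive definiteness of $g_{ij}$ is the Riemannian hypothesis of \textbf{Axiom 1}, and the identification $-D_{i}D_{j}\mathcal{S}_{g}=g_{ij}$ tells us that the covariant Hessian of $\mathcal{S}_{g}$ is everywhere negative definite, which is the coordinate-invariant expression of local concavity. The main obstacle lies in the passage from ``$=0$ on unit-speed geodesic tangents'' to the tensor identity ``$T_{ij}\equiv 0$''; this step hinges on three ingredients working together, namely the torsion-freeness of the connection (so that $T$ is symmetric), the richness of geodesic initial conditions (so that every unit direction is represented), and the positive definiteness of $g$ (so that rescaling sweeps the unit sphere out to the whole tangent space).
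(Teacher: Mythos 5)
Your argument is correct and follows essentially the same route as the paper: differentiate \textbf{Axiom 3} along a unit-speed geodesic, use the geodesic equation to recognise the covariant Hessian $D_{i}D_{j}\mathcal{S}_{g}=\partial_{i}\partial_{j}\mathcal{S}_{g}-\Gamma^{k}_{ij}\partial_{k}\mathcal{S}_{g}$, and combine with $g_{ij}\dot{I}^{i}\dot{I}^{j}=1$ to get $(g_{ij}+D_{i}D_{j}\mathcal{S}_{g})\dot{I}^{i}\dot{I}^{j}=0$. Your explicit passage from this scalar relation to the tensor identity (arbitrariness of the geodesic's initial unit tangent, rescaling, and polarisation of the symmetric form) merely spells out what the paper compresses into ``its covariant character leads to Eq.~(\ref{cov.EH})'', and your reading of positive definiteness as part of the Riemannian hypothesis, with concavity then following from $-D_{i}D_{j}\mathcal{S}_{g}=g_{ij}$, is an acceptable (indeed cleaner) version of the paper's equally brief closing remark.
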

\begin{proof}
The searching of the curve with minimal length (\ref{distance})
between two arbitrary points $(P,Q)$ is a variational problem that
leads to the following ordinary differential equations
\cite{Berger}:
\begin{equation} \label{geodesic}
\dot{I}_{g}^{k}(t)D_{k}\dot{I}_{g}^{i}(t)=\ddot{I}_{g}^{i}(t)+
\Gamma^{i}_{mn}\left[I_{g}(t)|\theta\right]\dot{I}_{g}^{m}(t)\dot{I}%
_{g}^{n}(t)=0,
\end{equation}
which describes the geodesic $I_{g}(t)$ with an unitary affine
parametrization. Eqs.(\ref{rate}) and (\ref{metric}) can be
rephrased as follows:
\begin{equation}\label{alter.post3}
\Delta s=-\Delta\Phi(s)\rightarrow\frac{d\Phi(s)}{ds}=
\frac{d^{2}\mathcal{S}_{g}}{ds^{2}}=-1.
\end{equation}
Taking into account the geodesic differential equations
(\ref{geodesic}), constraint (\ref{alter.post3}) can be rewritten
as:
\begin{eqnarray}  \label{razonamiento}
\frac{d^{2}\mathcal{S}_{g}}{ds^{2}} = \ddot{I}_{g}^{k}\frac{\partial
\mathcal{S}_{g}}{\partial I^{k}}+\dot{I}_{g}^{i}\dot{I}_{g}^{j}\frac{%
\partial^{2} \mathcal{S}_{g}}{\partial I^{i}\partial I^{j}}
=\dot{I}_{g}^{i}\dot{I}_{g}^{j}\left\{\frac{\partial^{2} \mathcal{S}_{g}}{%
\partial I^{i}\partial I^{j}}-\Gamma^{k}_{ij}\frac{\partial \mathcal{S}_{g}}{%
\partial I^{k}}\right\},
\end{eqnarray}
where it is possible to identify the \textit{covariant entropy
Hessian} $\mathcal{H}_{ij}$:
\begin{equation}\label{Hessian}
\mathcal{H}_{ij}=D_{i}D_{j}\mathcal{S}_{g}=\frac{\partial^{2} \mathcal{S}_{g}}{%
\partial I^{i}\partial I^{j}}-\Gamma^{k}_{ij}\frac{\partial \mathcal{S}_{g}}{%
\partial I^{k}}.
\end{equation}
Eqs.(\ref{razonamiento})-(\ref{Hessian}) can be combined with
constraint (\ref{affine}) to obtain the following expression:
\begin{equation}
(g_{ij}+\mathcal{H}_{ij})\dot{I}_{g}^{i}\dot{I}_{g}^{j}=0.
\end{equation}
Its covariant character leads to Eq.(\ref{cov.EH}). The concave
behavior of the scalar entropy $\mathcal{S}_{g}(I|\theta)$ and the
positive definition of the metric tensor $g_{ij}(I|\theta)$ are two
direct consequences of Eq.(\ref{alter.post3}).
\end{proof}
\begin{corollary}
The metric tensor $g_{ij}\left( I|\theta \right) $ can be obtained
from a given scalar entropy $\mathcal{S}_{g}\left( I|\theta \right)$
through the following set of \textit{covariant partial differential
equations}:
\begin{equation}  \label{cov.equation}
g_{ij}=-\frac{\partial^{2}\mathcal{S}_{g}}{\partial I^{i}\partial I^{j}}+%
\frac{1}{2}g^{km}\left( \frac{\partial g_{im}}{\partial I^{j}}+\frac{%
\partial g_{jm}}{\partial I^{i}}-\frac{\partial g_{ij}}{\partial I^{m}}%
\right)\frac{\partial\mathcal{S}_{g}}{\partial I^{k}}.
\end{equation}
The admissible solutions derived from the nonlinear problem
(\ref{cov.equation}) should be everywhere finite and
differentiable, including also on boundary of the manifold $\mathcal{M}%
_{\theta}$.
\end{corollary}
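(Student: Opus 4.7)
The plan is to derive the stated covariant PDE directly from the parent theorem's identity $g_{ij}(I|\theta)=-D_{i}D_{j}\mathcal{S}_{g}(I|\theta)$ by substituting the coordinate representations of the covariant Hessian and of the Levi-Civita connection, and then to discuss why the regularity clause is a genuine (non-algebraic) constraint.

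First, I would use the fact that Axiom 1 forces the connection $\Gamma^{k}_{ij}$ to be metric-compatible and torsionless, so that its coordinate expression is the Levi-Civita formula already written in Eq.~(\ref{Levi-Civita}). For any scalar function $f$ on $\mathcal{M}_{\theta}$, the second covariant derivative admits the coordinate form
\begin{equation}
D_{i}D_{j}f=\frac{\partial^{2}f}{\partial I^{i}\partial I^{j}}-\Gamma^{k}_{ij}\frac{\partial f}{\partial I^{k}},
\end{equation}
as was in fact displayed at Eq.~(\ref{Hessian}) in the proof of the parent theorem. Applying this to $f=\mathcal{S}_{g}$ and inserting the result into (\ref{cov.EH}) yields the intermediate identity
\begin{equation}
g_{ij}=-\frac{\partial^{2}\mathcal{S}_{g}}{\partial I^{i}\partial I^{j}}+\Gamma^{k}_{ij}\frac{\partial \mathcal{S}_{g}}{\partial I^{k}}.
\end{equation}

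Second, I would replace the symbol $\Gamma^{k}_{ij}$ by its Levi-Civita expression (\ref{Levi-Civita}). This substitution is purely mechanical and reproduces (\ref{cov.equation}) verbatim; no extra geometric input is needed, since both the covariant Hessian and the Christoffel symbols are already determined by $g_{ij}$ through Axiom 1. At this point the PDE itself is established.

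The only non-routine part is the admissibility clause, and that is where I expect the main obstacle to lie. Equation (\ref{cov.equation}) is a nonlinear system for $g_{ij}$ whose Christoffel piece explicitly requires inverting the metric, so well-posedness demands $|g_{ij}(I|\theta)|\neq 0$ together with enough smoothness of $g_{ij}$ to make the first and second derivative terms meaningful throughout $\mathcal{M}_{\theta}$. I would argue that these interior conditions follow from the positive-definiteness already proven in the parent theorem and from the differentiability of $\mathcal{S}_{g}$ posited in Axiom 2. The extension up to $\partial\mathcal{M}_{\theta}$ is the delicate point: it is the geometric counterpart of Axiom 4, which forces $\rho(I|\theta)$ and its first partial derivatives to vanish on $\partial\mathcal{M}_{\theta}$, a condition that, through the alternative representation (\ref{ScalarEntropy}), ties together the boundary behavior of $\mathcal{S}_{g}$, of $|g_{ij}|$ and hence of any admissible solution of (\ref{cov.equation}). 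A complete treatment of existence and uniqueness among this restricted class of solutions would require a separate regularity analysis, but the corollary itself merely asserts the PDE and indicates which solutions are to be discarded.
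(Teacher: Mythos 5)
Your derivation is correct and coincides with the paper's (implicit) justification of this corollary: the equation is obtained by writing the covariant Hessian of Theorem~1 in coordinates, $D_{i}D_{j}\mathcal{S}_{g}=\partial^{2}\mathcal{S}_{g}/\partial I^{i}\partial I^{j}-\Gamma^{k}_{ij}\,\partial\mathcal{S}_{g}/\partial I^{k}$, and substituting the Levi-Civita expression (\ref{Levi-Civita}) for $\Gamma^{k}_{ij}$ into $g_{ij}=-D_{i}D_{j}\mathcal{S}_{g}$. Your remarks on the admissibility clause are consistent with the paper's statement that solutions must be everywhere finite and differentiable up to $\partial\mathcal{M}_{\theta}$.
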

\begin{corollary}
Ruppeiner geometry of thermodynamics \cite{Ruppeiner1} is obtained
from as a particular case of thermo-statistics geometry restricting
to the gaussian approximation of the distribution function
(\ref{EinsteinPostulate}):
\begin{equation}  \label{Gaussian}
dp\left( I|\theta \right) \simeq \mathcal{N}\exp \left[ -\frac{1}{2}%
g_{ij}\left( \bar{I}|\theta \right) \delta I^{i}\delta I^{j}\right] \sqrt{%
\left\vert \frac{g_{ij}\left( \bar{I}|\theta \right) }{2\pi
}\right\vert }dI,
\end{equation}
where $\delta I=I-\bar{I}$ and
$\mathcal{N}=\exp\left[\mathcal{S}_{g}\left( \bar{I}|\theta
\right)\right]$. Moreover, the normalization of distribution
function (\ref{Gaussian}) leads to the following estimation for the
maximum scalar entropy $\mathcal{S}_{g}\left( \bar{I}|\theta \right)
$:
\begin{equation}
\mathcal{S}_{g}\left( \bar{I}|\theta \right) \simeq 0.
\label{estimation}
\end{equation}
\end{corollary}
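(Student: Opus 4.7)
The plan is to derive both assertions from a second-order Taylor expansion of the scalar entropy $\mathcal{S}_{g}(I|\theta)$ around its global maximum $\bar{I}$, combined with the identification (\ref{cov.EH}) already established in Theorem~1.

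First, I would observe that at the maximum point $\bar{I}$ the ordinary partial derivatives $\partial \mathcal{S}_{g}/\partial I^{k}$ vanish, so the Christoffel correction in the covariant Hessian (\ref{Hessian}) drops out and Eq.~(\ref{cov.EH}) reduces to the Ruppeiner ansatz (\ref{RuppeinerAnsatz}), namely $g_{ij}(\bar{I}|\theta)=-\partial^{2}\mathcal{S}_{g}(\bar{I}|\theta)/\partial I^{i}\partial I^{j}$. Then I would write the Taylor expansion
\begin{equation}
\mathcal{S}_{g}(I|\theta)\simeq \mathcal{S}_{g}(\bar{I}|\theta)-\tfrac{1}{2}\,g_{ij}(\bar{I}|\theta)\,\delta I^{i}\delta I^{j},
\end{equation}
which is valid to quadratic order in $\delta I=I-\bar{I}$.

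Next I would substitute this expansion into the Einstein-type postulate (\ref{EinsteinPostulate}) and, to be consistent with the quadratic order of approximation, evaluate the invariant volume element (\ref{inv.volume}) at $\bar{I}$, i.e.\ replace $\sqrt{|g_{ij}(I|\theta)/2\pi|}$ by $\sqrt{|g_{ij}(\bar{I}|\theta)/2\pi|}$. This directly yields the Gaussian form (\ref{Gaussian}) with the prefactor $\mathcal{N}=\exp[\mathcal{S}_{g}(\bar{I}|\theta)]$, recovering Ruppeiner geometry as a particular case since its metric ansatz (\ref{RuppeinerAnsatz}) is just the local quadratic structure of $\mathcal{S}_{g}$.

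Finally, to obtain the estimation (\ref{estimation}), I would impose the normalization $\int dp(I|\theta)=1$ on the Gaussian distribution (\ref{Gaussian}). The standard multidimensional Gaussian integral gives
\begin{equation}
\int \exp\left[-\tfrac{1}{2}\,g_{ij}(\bar{I}|\theta)\,\delta I^{i}\delta I^{j}\right]\sqrt{\left|\frac{g_{ij}(\bar{I}|\theta)}{2\pi}\right|}\,dI=1,
\end{equation}
so that $\mathcal{N}=\exp[\mathcal{S}_{g}(\bar{I}|\theta)]\simeq 1$, equivalently $\mathcal{S}_{g}(\bar{I}|\theta)\simeq 0$. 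The only delicate point, and the main obstacle, is conceptual rather than computational: one must justify extending the Gaussian integral over all of $\mathbb{R}^{n}$ rather than the (possibly bounded) manifold $\mathcal{M}_{\theta}$, which is admissible in the same limit that makes the quadratic expansion legitimate, namely when the probability mass is sharply concentrated around $\bar{I}$ well inside $\mathcal{M}_{\theta}$ (consistent with Axiom~4 on the vanishing of $\rho$ at the boundary).
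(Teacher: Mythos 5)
Your proposal is correct and takes essentially the same route as the paper: identify the covariant Hessian (\ref{cov.EH}) with the Ruppeiner ansatz (\ref{RuppeinerAnsatz}) at the maximum $\bar{I}$ (where $\partial\mathcal{S}_{g}/\partial I^{k}=0$ kills the Christoffel term), expand $\mathcal{S}_{g}$ to quadratic order, freeze the invariant volume element at $\bar{I}$, and read off $\mathcal{S}_{g}(\bar{I}|\theta)\simeq 0$ from the normalization of the resulting Gaussian. The only nuance is that the paper justifies dropping the first-order variation of $\log\sqrt{|g_{ij}(I|\theta)|}$ (which naively dominates a second-order term) through the thermodynamic scaling $\mathcal{S}_{g}\sim\Lambda$, $I\sim\Lambda$, $g_{ij}\sim 1/\Lambda$, $\langle\delta I^{i}\delta I^{j}\rangle\sim\Lambda$, whereas you appeal to sharp concentration of the probability mass around $\bar{I}$; this is the same substance with a slightly different emphasis.
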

\begin{proof}
Eq.(\ref{cov.EH}) drops to Ruppeiner definition of thermodynamic
metric tensor (\ref{RuppeinerAnsatz}) restricting to the point
$\bar{I}$ with of the global maximum scalar entropy
$\mathcal{S}_{g}(I|\theta)$. For a sufficiently large thermodynamic
short-range interacting system, it is possible to assume the
following scaling dependencies with the characteristic size
parameter $\Lambda$, $\mathcal{S}_{g}\sim \Lambda$, $I\sim
\Lambda\Rightarrow g_{ij}\sim1/\Lambda$ and the correlation
functions $\left\langle\delta I^{i}\delta I^{j}\right\rangle\sim
\Lambda$, where the role of $\Lambda$ can be performed by the system
volume $V$, the total mass $M$, number of constituents $N$, etc.
Gaussian distribution (\ref{Gaussian}) represents the lowest
approximation level that accounts for the thermodynamic fluctuations
of the macroscopic observables $I$ in the power expansion of the
scalar entropy $\mathcal{S}_{g}(I|\theta)$ and the logarithm of the
factor $\sqrt{|g_{ij}(I|\theta)|}$.
\end{proof}

\begin{observation}
\textbf{Axiom 4} talks about the asymptotic behavior of the
equilibrium distribution function (\ref{EinsteinPostulate}) for any
point $I_{b}$ on the boundary $\partial\mathcal{M}_{\theta}$:
\begin{equation}  \label{boundary.cond}
\lim_{I\rightarrow I_{b}}\rho(I|\theta)=\lim_{I\rightarrow I_{b}}\frac{%
\partial}{\partial I^{i}}\rho(I|\theta)=0.
\end{equation}
Such conditions play a fundamental role in the character of
stationary points (maxima and minima) of the scalar entropy
$\mathcal{S}_{g}(I|\theta)$ as well as the \textit{fluctuation
theorems} derived on the Riemannian structure of the manifold
$\mathcal{M}_{\theta}$. Due to their own importance, such
fluctuation theorems shall be discussed in a forthcoming paper.
\end{observation}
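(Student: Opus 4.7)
The plan is to read the observation as unpacking Axiom~4 from its verbal form into the explicit limit equation (\ref{boundary.cond}) and then to verify its compatibility with the Einstein-type postulate (\ref{EinsteinPostulate}). The axiom says, in words, that both $\rho$ and its first partial derivatives vanish at every boundary point $I_b \in \partial \mathcal{M}_\theta$; equation (\ref{boundary.cond}) is a literal transcription of that statement into the two-sided limit formulation. Thus a nontrivial part of the ``proof'' is simply observing that the two limits in (\ref{boundary.cond}) encode exactly the content of Axiom~4, and arguing why this particular rewriting is the form one actually uses downstream.

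The substantive step is to pin down what the axiom forces on the scalar entropy $\mathcal{S}_g$ and on the metric $g_{ij}$. Using (\ref{EinsteinPostulate}) together with (\ref{inv.volume}) I would write $\rho(I|\theta) = \exp[\mathcal{S}_g(I|\theta)]\sqrt{|g_{ij}(I|\theta)/2\pi|}$ and differentiate once to obtain $\partial_i \rho = \rho \cdot \bigl[\partial_i \mathcal{S}_g + \tfrac12\partial_i \log|g_{ij}/2\pi|\bigr]$. Assuming the metric determinant is everywhere finite and nondegenerate up to $\partial\mathcal{M}_\theta$ (which the framework tacitly needs anyway for the invariant volume (\ref{inv.volume}) to be well defined), the first half of (\ref{boundary.cond}) forces $\mathcal{S}_g \to -\infty$ on $\partial\mathcal{M}_\theta$, and the bracketed factor stays controlled, so the second half of (\ref{boundary.cond}) follows automatically from the first together with boundedness of the metric data at the boundary.

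Next I would explain \emph{why} the observation bothers to rewrite Axiom~4 in this form: the conditions (\ref{boundary.cond}) are precisely what is needed so that, in the announced fluctuation theorems, any divergence-theorem manipulation of the form $\int_{\mathcal{M}_\theta} D_i(\rho\, V^i)\, dI$ loses its boundary contributions. Stokes' theorem produces a surface integral involving $\rho V^i$ (and, after an integration by parts, one involving $\partial_i \rho \cdot V^i$), both of which vanish identically under (\ref{boundary.cond}). The same conditions guarantee that the stationary points of $\mathcal{S}_g$ lie strictly in the interior of $\mathcal{M}_\theta$, which is what legitimates the covariant-Hessian analysis of Theorem~1 at those points.

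The main obstacle is not computational but conceptual: since the observation largely restates an axiom, the ``proof'' must make explicit the regularity hypotheses on $g_{ij}$ near $\partial\mathcal{M}_\theta$ under which the passage from the verbal form of Axiom~4 to the limit statement (\ref{boundary.cond}) is unambiguous. I would handle this by either (i) imposing as a standing hypothesis that $g_{ij}$ extends to a finite, nondegenerate symmetric form up to $\partial\mathcal{M}_\theta$, or (ii) verifying consistency on the representative Gaussian distribution (\ref{Gaussian}), where $\mathcal{S}_g \simeq -\frac{1}{2} g_{ij}(\bar{I}|\theta)\,\delta I^i \delta I^j \to -\infty$ at spatial infinity reproduces (\ref{boundary.cond}) by direct computation.
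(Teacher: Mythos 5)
The statement you were asked to address is not a theorem in the paper: it is a commentary paragraph (the paper's \texttt{observation} environment is labelled ``Commentary'') that merely transcribes Axiom~4 into the limit form (\ref{boundary.cond}) and then \emph{defers} the promised consequences --- the role of the boundary conditions in the fluctuation theorems --- to a forthcoming paper. So there is no proof in the paper to compare against, and most of your proposal (the reading of the axiom, the Stokes-theorem motivation for why one wants both $\rho$ and $\partial_i\rho$ to vanish on $\partial\mathcal{M}_\theta$, the observation that stationary points are thereby pushed into the interior) is sensible framing that goes beyond what the author actually argues. The companion Remark in the paper, whose short proof uses the transformation rules (\ref{rho.tr})--(\ref{drho.tr}), only establishes reparametrization invariance of the conditions, not any of the content you supply.

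There is, however, one genuine error in your substantive step. You write $\partial_i\rho=\rho\bigl[\partial_i\mathcal{S}_g+\tfrac12\partial_i\log\left\vert g_{ij}/2\pi\right\vert\bigr]$, correctly conclude that $\rho\to0$ with bounded nondegenerate metric forces $\mathcal{S}_g\to-\infty$ on $\partial\mathcal{M}_\theta$, but then claim the bracketed factor ``stays controlled'' so that the second half of (\ref{boundary.cond}) follows automatically from the first. This cannot be right: if $\mathcal{S}_g\to-\infty$ as $I\to I_b$ at a finite boundary point, then $\partial_i\mathcal{S}_g$ generically diverges there, and $\rho\,\partial_i\mathcal{S}_g$ is an indeterminate $0\cdot\infty$ form. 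Concretely, a density behaving like $\rho\sim(I-I_b)$ near the boundary has $\mathcal{S}_g\to-\infty$ and $\rho\to0$ yet $\partial_I\rho\to1\neq0$. The vanishing of the first derivatives is an independent strengthening, which is precisely why Axiom~4 imposes both conditions rather than only the first; your proposal should treat them as two separate hypotheses rather than deriving one from the other.
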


\begin{remark}
The boundary conditions (\ref{boundary.cond}) are independent from
the admissible coordinate representation $\mathcal{R}_{I}$ of the
manifold $\mathcal{M}_{\theta}$.
\end{remark}
\begin{proof}
This remark follows as a direct consequence of the transformation
rule of the probability density:
\begin{equation}
\rho (\Theta |\theta)=\rho (I|\theta)\left\vert \frac{\partial \Theta }{%
\partial I}\right\vert ^{-1}  \label{rho.tr}
\end{equation}
as well as the ones associated with its derivatives:
\begin{equation}  \label{drho.tr}
\frac{\partial \rho \left( \Theta |\theta \right) }{\partial \Theta ^{i}} =
\frac{\partial I^{j}}{\partial \Theta ^{i}}\left\{ \frac{\partial \rho
\left( I|\theta \right) }{\partial I^{j}}-\rho \left( I|\theta \right) \frac{%
\partial }{\partial I^{j}}\log \left\vert \frac{\partial \Theta }{\partial I}%
\right\vert \right\} \left\vert \frac{\partial \Theta }{\partial I}%
\right\vert ^{-1}
\end{equation}
under a coordinate reparametrization
$\Theta(I):\mathcal{R}_{I}\rightarrow\mathcal{R}_{\Theta}$ whose
Jacobian $\left|\partial\Theta/\partial I \right|$ be everywhere
finite and differentiable.
\end{proof}

\section{Gaussian and spherical representations}
\subsection{Gaussian Planck potential $\mathcal{P}_{g}(\theta)$}
\begin{definition}
The covariant form of the \textbf{gradiental generalized forces}
$\psi _{i}\left( I|\theta \right) $ are defined from the scalar
entropy $\mathcal{S}_{g}\left( I|\theta \right)$ as follows:
\begin{equation}
\psi _{i}\left( I|\theta \right) =-D_{i}\mathcal{S}_{g}\left(
I|\theta \right) \equiv-\partial \mathcal{S}_{g}\left( I|\theta
\right) /\partial I^{i}.  \label{cov.DGF}
\end{equation}
Using the metric tensor $g^{ij}(I|\theta)$, it is possible to obtain
its contravariant counterpart $\psi^{i}(I|\theta)$:
\begin{equation}
\psi ^{i}\left( I|\theta \right) =g^{ij}\left( I|\theta \right) \psi
_{j}\left( I|\theta \right),
\end{equation}
as well as its the square norm $\psi ^{2}=\psi ^{2}(I|\theta)$:
\begin{equation}
\psi ^{2}(I|\theta)=\psi ^{i}\left( I|\theta \right)\psi _{i}\left(
I|\theta \right).
\end{equation}
\end{definition}

\begin{theorem}
The scalar entropy $\mathcal{S}_{g}(I|\theta )$ can be expressed in
terms of the square norm of the gradiental generalized forces as
follows:
\begin{equation}
\mathcal{S}_{g}(I|\theta )=\mathcal{P}_{g}(\theta )-\frac{1}{2}%
\psi^{2}(I|\theta ), \label{Sdecomposition}
\end{equation}
where $\mathcal{P}_{g}(\theta )$ is a certain function on control
parameters $\theta$ hereafter referred to as the \textbf{gaussian
Planck potential}.
\end{theorem}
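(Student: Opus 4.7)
The plan is to show that the $I$-gradient of $\mathcal{S}_g + \tfrac{1}{2}\psi^2$ vanishes identically on $\mathcal{M}_\theta$, so this combination depends only on the control parameters $\theta$; one then simply \emph{defines} $\mathcal{P}_g(\theta)$ to be that residual function and rearranges to obtain (\ref{Sdecomposition}).

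First, by the definition (\ref{cov.DGF}), one has $\partial_i \mathcal{S}_g = -\psi_i$ trivially. Second, since $\psi^2 = g^{jk}\psi_j \psi_k$ is a scalar field on $\mathcal{M}_\theta$, its partial derivative coincides with its covariant derivative, so $\partial_i \psi^2 = D_i \psi^2$. Invoking metric compatibility (Axiom~1, Eq.~(\ref{Dg})) in its raised-index form $D_i g^{jk} = 0$ (which follows from differentiating $g^{jk}g_{kl}=\delta^j_l$), I can push $D_i$ past the inverse metric to get $\partial_i \psi^2 = 2\,\psi^j\, D_i \psi_j$.

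Third, I would use Theorem~1 to replace $D_i \psi_j$. Because $\mathcal{S}_g$ is a scalar, $D_j \mathcal{S}_g = \partial_j \mathcal{S}_g = -\psi_j$, and hence $D_i \psi_j = -D_i D_j \mathcal{S}_g = g_{ij}$ by Eq.~(\ref{cov.EH}). Substitution yields $\partial_i \psi^2 = 2\,\psi^j g_{ij} = 2\psi_i$, so that $\partial_i\bigl(\mathcal{S}_g + \tfrac{1}{2}\psi^2\bigr) = -\psi_i + \psi_i = 0$. Assuming $\mathcal{M}_\theta$ is connected, this forces $\mathcal{S}_g + \tfrac{1}{2}\psi^2$ to be independent of $I$, defining $\mathcal{P}_g(\theta)$ uniquely up to the overall additive normalization already fixed by Axiom~2.

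There is no substantive obstacle: the whole argument is a one-line consequence of the Hessian identity $D_i \psi_j = g_{ij}$ established in Theorem~1, combined with metric compatibility. The only point requiring a moment of care is the step $\partial_i \psi^2 = 2\psi^j D_i \psi_j$, which relies simultaneously on $\psi^2$ being a scalar (so partial equals covariant) and on $D_i g^{jk}=0$; if $\mathcal{M}_\theta$ were disconnected, $\mathcal{P}_g$ would only be locally constant in $I$ on each component, but the axiomatic setup of the paper implicitly excludes this pathology.
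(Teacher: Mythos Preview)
Your proposal is correct and follows essentially the same route as the paper: define $\mathcal{P}_g=\mathcal{S}_g+\tfrac{1}{2}\psi^2$, compute its (covariant, equivalently partial) derivative, and watch it vanish by combining metric compatibility (\ref{Dg}), the Hessian identity (\ref{cov.EH}) in the form $D_i\psi_j=g_{ij}$, and the definition (\ref{cov.DGF}). Your write-up is in fact slightly more explicit than the paper's, which simply states that the covariant derivatives of $\mathcal{P}_g$ vanish ``as direct consequences'' of these three ingredients without spelling out the intermediate step $\partial_i\psi^2=2\psi_i$.
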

\begin{proof}
Let us introduce the scalar function $\mathcal{P}_{g}(I|\theta )$:
\begin{equation}
\mathcal{P}_{g}(I|\theta )=\mathcal{S}_{g}(I|\theta )+\frac{1}{2}%
g^{ij}(I|\theta )\psi _{i}(I|\theta )\psi _{j}(I|\theta ).
\end{equation}%
It is easy to verify that its covariant derivatives:
\begin{eqnarray}
&&D_{k}\mathcal{P}_{g}(I|\theta )=D_{k}S_{g}(I|\theta )+
\frac{1}{2}\left\{ \psi _{i}(I|\theta )\psi _{j}(I|\theta
)D_{k}g^{ij}(I|\theta )+\right.   \\
&&\left. +g^{ij}(I|\theta )\left[ \psi _{i}(I|\theta )D_{k}\psi
_{j}(I|\theta )+\psi _{j}(I|\theta )D_{k}\psi _{i}(I|\theta )\right]
\right\} \nonumber
\end{eqnarray}
vanish as direct consequences of the metric tensor properties
(\ref{Dg}) and (\ref{cov.EH}), as well as the definition
(\ref{cov.DGF}) of the gradiental generalized forces $\psi
_{i}(I|\theta )$. Since the covariant derivatives of any scalar
function are given by the usual partial derivatives:
\begin{equation}
D_{k}\mathcal{P}_{g}(I|\theta )=\frac{\partial }{\partial I^{k}}\mathcal{P}%
_{g}(I|\theta )=0,
\end{equation}%
the scalar function $\mathcal{P}_{g}(I|\theta )$ only depends on the control
parameters:
\begin{equation}
\mathcal{P}_{g}(I|\theta )\equiv \mathcal{P}_{g}(\theta ).
\end{equation}
This last result leads to Eq.(\ref{Sdecomposition}).
\end{proof}
\begin{corollary}\label{entropy.planck}
The value of scalar entropy $\mathcal{S}_{g}(I|\theta )$ at all its
extreme points derived from the stationary condition:
\begin{equation}\label{global.stat}
\psi^{2}(\bar{I}|\theta)=0
\end{equation}
is exactly given by the gaussian Planck potential
$\mathcal{P}_{g}(\theta)$.
\end{corollary}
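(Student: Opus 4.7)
The plan is to obtain the corollary as an almost immediate application of the decomposition formula (\ref{Sdecomposition}) established in the preceding theorem, with the only subtle point being the correct interpretation of the stationary condition written in the contracted form $\psi^{2}(\bar{I}|\theta)=0$.

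First I would unwind the definition (\ref{cov.DGF}) to observe that a point $\bar{I}$ is an extremum of $\mathcal{S}_{g}(I|\theta)$ in the ordinary sense if and only if every covariant component vanishes, $\psi_{i}(\bar{I}|\theta)=-\partial\mathcal{S}_{g}(\bar{I}|\theta)/\partial I^{i}=0$. I would then argue that this componentwise stationary condition is equivalent to the scalar condition (\ref{global.stat}). One direction is trivial: if all $\psi_{i}$ vanish, then so does the contraction $\psi^{2}=g^{ij}\psi_{i}\psi_{j}$. For the converse, I would invoke Theorem 1, which guarantees that the metric tensor $g_{ij}(I|\theta)$ — and hence its inverse $g^{ij}(I|\theta)$ — is positive definite everywhere on $\mathcal{M}_{\theta}$. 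Positive definiteness implies that the quadratic form $g^{ij}\psi_{i}\psi_{j}$ vanishes only when all $\psi_{i}$ vanish, so $\psi^{2}(\bar{I}|\theta)=0$ genuinely picks out the stationary points of $\mathcal{S}_{g}$.

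Once this equivalence is in hand, the conclusion follows by direct substitution into (\ref{Sdecomposition}): at any stationary point $\bar{I}$,
\begin{equation}
\mathcal{S}_{g}(\bar{I}|\theta)=\mathcal{P}_{g}(\theta)-\tfrac{1}{2}\psi^{2}(\bar{I}|\theta)=\mathcal{P}_{g}(\theta),
\end{equation}
which is precisely the claim of the corollary, and incidentally shows that $\mathcal{S}_{g}$ takes the same value $\mathcal{P}_{g}(\theta)$ at every one of its critical points.

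There is no substantive obstacle here; the decomposition theorem does all the heavy lifting. The only point deserving care is the explicit appeal to the positive definiteness of $g^{ij}$, without which the scalar equation $\psi^{2}=0$ could in principle admit non-stationary solutions. This is why the corollary should be stated only after Theorem 1, from which positive definiteness is inherited.
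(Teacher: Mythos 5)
Your proof is correct and follows exactly the route the paper intends: the corollary is an immediate substitution of the stationary condition into the decomposition (\ref{Sdecomposition}), which is why the paper states it without a separate proof. Your added remark that positive definiteness of $g^{ij}$ (from Theorem 1) makes $\psi^{2}=0$ equivalent to genuine stationarity of $\mathcal{S}_{g}$ is a sound and worthwhile clarification, but not a departure from the paper's argument.
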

\begin{corollary}
The equilibrium distribution function (\ref{EinsteinPostulate})
admits the following \textbf{gaussian representation}:
\begin{equation}
dp(I|\theta )=\frac{1}{\mathcal{Z}_{g}(\theta )}\exp \left[ -\frac{1}{2}%
\psi^{2} (I|\theta )\right] d\mu (I|\theta ). \label{universal}
\end{equation}
Here, the factor $\mathcal{Z}_{g}(\theta )$ is related to the
gaussian Planck potential as follows:
\begin{equation}\label{gaussianPlanck}
\mathcal{P}_{g}(\theta )=-\log \mathcal{Z}_{g}(\theta ),
\end{equation}
which shall be hereafter referred to as the \textbf{gaussian
partition function}.
\end{corollary}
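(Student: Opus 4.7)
The plan is to obtain the gaussian representation as an immediate corollary of Theorem 2 (the decomposition $\mathcal{S}_g = \mathcal{P}_g(\theta) - \tfrac{1}{2}\psi^2$) combined with the Einstein-type postulate in Axiom 2. First I would substitute the decomposition (\ref{Sdecomposition}) directly into (\ref{EinsteinPostulate}), factoring the $I$-independent exponential $\exp[\mathcal{P}_g(\theta)]$ out of the distribution:
\begin{equation}
dp(I|\theta) = \exp[\mathcal{S}_g(I|\theta)]\,d\mu(I|\theta) = \exp[\mathcal{P}_g(\theta)]\exp\!\left[-\tfrac{1}{2}\psi^2(I|\theta)\right]d\mu(I|\theta).
\end{equation}
This already has the announced form provided we identify $\mathcal{Z}_g(\theta) \equiv \exp[-\mathcal{P}_g(\theta)]$, which is equivalent to the claimed relation (\ref{gaussianPlanck}).

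Next I would justify the name \emph{partition function} for $\mathcal{Z}_g(\theta)$ by showing it equals the natural normalizing integral. Integrating both sides over $\mathcal{M}_\theta$ and using that $dp(I|\theta)$ is a probability distribution,
\begin{equation}
1 = \int_{\mathcal{M}_\theta} dp(I|\theta) = \exp[\mathcal{P}_g(\theta)]\int_{\mathcal{M}_\theta}\exp\!\left[-\tfrac{1}{2}\psi^2(I|\theta)\right]d\mu(I|\theta),
\end{equation}
so that $\mathcal{Z}_g(\theta) = \int_{\mathcal{M}_\theta}\exp[-\tfrac{1}{2}\psi^2(I|\theta)]\,d\mu(I|\theta)$, which upon taking the logarithm reproduces (\ref{gaussianPlanck}).

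There is no real obstacle in the derivation itself; the substantive work was already done in proving Theorem 2. The only point worth being explicit about is that $\mathcal{P}_g(\theta)$ depends only on the control parameters $\theta$ (established in Theorem 2), which is precisely what allows it to be pulled out of the integral and identified with the normalization constant. One could also remark that the stationary condition (\ref{global.stat}) and Corollary \ref{entropy.planck} make the interpretation transparent: since $\psi^2 \ge 0$ with equality at the extremal points $\bar I$, the gaussian representation explicitly displays $dp(I|\theta)$ as a probability density peaked at the maxima of $\mathcal{S}_g$, with the maximum-entropy value $\mathcal{P}_g(\theta)$ playing the role of $-\log \mathcal{Z}_g(\theta)$ in analogy with conventional statistical mechanics.
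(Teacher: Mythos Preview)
Your proposal is correct and matches the paper's intended route: the paper states this corollary without any explicit proof, treating it as an immediate consequence of substituting the decomposition (\ref{Sdecomposition}) of Theorem~2 into the Einstein postulate (\ref{EinsteinPostulate}), exactly as you do. Your additional remark that integrating over $\mathcal{M}_\theta$ identifies $\mathcal{Z}_g(\theta)$ with the normalizing integral is a useful clarification the paper leaves implicit.
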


\subsection{Maximum and completeness theorems}
\begin{theorem}
The scalar entropy $\mathcal{S}_{g}(I|\theta)$ exhibits a unique
stationary point $\bar{I}$, which corresponds to its global maximum.
\end{theorem}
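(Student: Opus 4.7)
The plan is to assemble the theorem from three already-established ingredients: the decomposition $\mathcal{S}_{g}(I|\theta)=\mathcal{P}_{g}(\theta)-\tfrac{1}{2}\psi^{2}(I|\theta)$ of Theorem~2, the positive-definiteness of $g_{ij}$ proved in Theorem~1, and the quadratic behavior of $\mathcal{S}_{g}$ along geodesics encoded in the rephrased Axiom~3, Eq.~(\ref{alter.post3}). Each supplies one piece of the claim: positivity of $\psi^{2}$ makes every stationary point a global maximum, Axiom~4 forces a stationary point to exist, and the quadratic behavior along geodesics forces it to be unique.

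First, because $g^{ij}(I|\theta)$ is positive definite everywhere, the square norm $\psi^{2}=g^{ij}\psi_{i}\psi_{j}$ is non-negative and vanishes exactly where all components $\psi_{i}$ do, i.e., at the stationary points of $\mathcal{S}_{g}$. Together with Eq.~(\ref{Sdecomposition}) this gives the uniform bound $\mathcal{S}_{g}(I|\theta)\leq\mathcal{P}_{g}(\theta)$, attained precisely at stationary points, so any stationary point automatically realizes the global maximum (consistent with Corollary~\ref{entropy.planck}). To produce one, I would use Axiom~4: rewriting (\ref{EinsteinPostulate}) as $\rho(I|\theta)=\exp[\mathcal{S}_{g}(I|\theta)]\sqrt{|g_{ij}/2\pi|}$ and invoking (\ref{boundary.cond}) shows that $\mathcal{S}_{g}\to-\infty$ as $I\to\partial\mathcal{M}_{\theta}$, so the continuous scalar entropy must attain its supremum at an interior point $\bar{I}\in\mathcal{M}_{\theta}$, at which necessarily $\psi_{i}(\bar{I}|\theta)=0$.

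For uniqueness I would exploit Eq.~(\ref{alter.post3}) directly: along any unit-affine-parametrized geodesic $I_{g}(s)$ the composition $\mathcal{S}_{g}[I_{g}(s)|\theta]$ satisfies $d^{2}\mathcal{S}_{g}/ds^{2}=-1$ and is therefore an exact quadratic $-\tfrac{1}{2}s^{2}+As+B$ in arc length. If two distinct stationary points $P,Q$ existed, I would join them by a minimizing geodesic $I_{g}:[0,L]\to\mathcal{M}_{\theta}$ with $L>0$; this geodesic stays in the interior because an inverted parabola is bounded below on $[0,L]$ by the minimum of its endpoint values, while $\mathcal{S}_{g}\to-\infty$ on the boundary. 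Stationarity of $\mathcal{S}_{g}$ at both ends forces $\tfrac{d}{ds}\mathcal{S}_{g}=-s+A$ to vanish at $s=0$ and at $s=L$, giving $A=0$ and simultaneously $L=A=0$, a contradiction.

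The step I expect to require the most care is the existence argument: passing from Axiom~4 to $\mathcal{S}_{g}\to-\infty$ on $\partial\mathcal{M}_{\theta}$ requires the metric determinant $|g_{ij}|$ to stay bounded and nonvanishing near the boundary, and the geodesic comparison used for uniqueness relies on enough geodesic completeness inside the interior to guarantee the existence of a minimizing geodesic between $P$ and $Q$. Both of these are implicit regularity hypotheses accompanying Axioms~1 and~4 rather than consequences of the variational identities themselves, and I would either adopt them as standing assumptions on $(\mathcal{M}_{\theta},g_{ij})$ or verify them case-by-case in the specific representations the paper introduces later.
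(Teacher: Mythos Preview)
Your proof is correct and follows essentially the same route as the paper: existence of a stationary point from Axiom~4's boundary conditions, and uniqueness by joining two putative stationary points with a geodesic and invoking Eq.~(\ref{alter.post3}). The paper phrases the contradiction through the strict monotonicity of the entropy production $\Phi(t)=d\mathcal{S}_{g}/dt$ along that geodesic (which must nonetheless vanish at both endpoints), whereas you integrate $d^{2}\mathcal{S}_{g}/ds^{2}=-1$ to an explicit quadratic and read off the same contradiction; the logic is identical, and your explicit flagging of the implicit regularity hypotheses (geodesic connectedness, control of $|g_{ij}|$ near $\partial\mathcal{M}_{\theta}$) is a welcome addition that the paper leaves tacit.
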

\begin{proof}
Since the metric tensor $g_{ij}(I|\theta)$ defines a positive
definite metric on the manifold $\mathcal{M}_{\theta}$, the
vanishing of the probability density:
\begin{equation}
\rho(I|\theta)=\exp\left[\mathcal{S}_{g}(I|\theta)\right]\sqrt{\left\vert \frac{g_{ij}\left( I|\theta \right) }{2\pi }%
\right\vert }
\end{equation}
on the boundary $\partial\mathcal{M}_{\theta}$ of the manifold
$\mathcal{M}_{\theta}$ for every admissible coordinate
representation $\mathcal{R}_{I}$ evidences the vanishing of the
scalar function:
\begin{equation}
\Psi(I|\theta)=\exp\left[\mathcal{S}_{g}(I|\theta)\right]
\end{equation}
on the boundary $\partial\mathcal{M}_{\theta}$. Since the function
$\Psi(I|\theta)$ is nonnegative, finite and differentiable on the
manifold $\mathcal{M}_{\theta}$, the entropy
$\mathcal{S}_{g}(I\theta)$ should exhibit at least a stationary
point where takes place the stationary condition
(\ref{global.stat}). Since the scalar entropy
$\mathcal{S}_{g}(I\theta)$ is a concave function, its stationary
points can only correspond to local maxima. Let us suppose the
existence of a least two stationary points $\bar{I}_{1}$ and
$\bar{I}_{2}$, which can always be connected with a certain geodesic
$I_{g}(t)$. According to constraint (\ref{alter.post3}), the entropy
production $\Phi(t)$ is a monotonous function along the curve
$I_{g}(t)$. Consequently, $\Phi(t)$ should exhibits different values
at the stationary points $\bar{I}_{1}$ and $\bar{I}_{2}$, which is
absurdum since the entropy production $\Phi(t)$ identically vanishes
for any stationary point of the scalar entropy
$\mathcal{S}_{g}(I|\theta)$:
\begin{equation}
\Phi(t)=-\dot{I}^{i}(t)\psi_{i}[I(t)|\theta].
\end{equation}
Consequently, there exist only one stationary point that corresponds
with the global maximum of the scalar entropy
$\mathcal{S}_{g}(I|\theta)$.
\end{proof}
\begin{theorem}\label{entropy.distance}
Any hyper-surface of constant scalar entropy
$\mathcal{S}_{g}(I|\theta)$ is just the boundary of a
$n$-dimensional sphere $S^{n}(\bar{I},r)\subset
\mathcal{M}_{\theta}$ centered at the point $\bar{I}$ with global
maximum entropy, whose radio $r=r(I|\theta)=D(I,\bar{I}|\theta)$ at
the point $I$ allows to rephrase the scalar entropy as follows:
\begin{equation}\label{E1}
\mathcal{S}_{g}(I|\theta)=\mathcal{P}_{g}(\theta)-\frac{1}{2}r^{2}(I|\theta),
\end{equation}
with $n=\dim\mathcal{M}_{\theta}$ being the dimension of the
manifold $\mathcal{M}_{\theta}$.
\end{theorem}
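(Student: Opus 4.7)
The plan is to exploit the unit-speed geodesic equation (\ref{alter.post3}) to reduce the scalar entropy along geodesics emanating from $\bar{I}$ to an elementary second-order ordinary differential equation in arc length, and then to read off both the quadratic dependence on $r$ and the spherical symmetry of the level sets.

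First I would pick an arbitrary point $I\in\mathcal{M}_{\theta}$ distinct from $\bar{I}$ and connect it to $\bar{I}$ by a minimizing geodesic $I_{g}(s)$ with unitary affine parametrization, chosen so that $I_{g}(0)=\bar{I}$ and $I_{g}(r)=I$, where $r=D(I,\bar{I}|\theta)$ is the distance of Definition 5. By Theorem 1 combined with Axiom 3 (cf.\ Eq.\ (\ref{alter.post3})), the scalar entropy obeys
\begin{equation}
\frac{d^{2}\mathcal{S}_{g}\bigl[I_{g}(s)|\theta\bigr]}{ds^{2}}=-1
\end{equation}
along this geodesic. The initial data at $s=0$ follow from the two previous results: by the maximum theorem and the definition (\ref{cov.DGF}) of $\psi_{i}$, the first-derivative term evaluates as
\begin{equation}
\left.\frac{d\mathcal{S}_{g}}{ds}\right|_{s=0}=-\dot{I}_{g}^{i}(0)\,\psi_{i}(\bar{I}|\theta)=0
\end{equation}
because $\bar{I}$ is the unique stationary point where $\psi^{2}(\bar{I}|\theta)=0$, and by Corollary \ref{entropy.planck} the initial value itself equals $\mathcal{S}_{g}(\bar{I}|\theta)=\mathcal{P}_{g}(\theta)$.

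Integrating the ODE twice with these initial conditions immediately yields $\mathcal{S}_{g}[I_{g}(s)|\theta]=\mathcal{P}_{g}(\theta)-\tfrac{1}{2}s^{2}$ and in particular, setting $s=r$, the claimed formula (\ref{E1}). From this identity the second assertion is automatic: a level set $\{I:\mathcal{S}_{g}(I|\theta)=c\}$ is exactly the set $\{I:r(I|\theta)=\sqrt{2(\mathcal{P}_{g}(\theta)-c)}\}$, which by Definition 5 is the boundary of the geodesic ball $S^{n}(\bar{I},r)\subset\mathcal{M}_{\theta}$.

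The main obstacle I anticipate is the implicit use of \emph{geodesic completeness with respect to} $\bar{I}$: the argument tacitly assumes that every point $I\in\mathcal{M}_{\theta}$ can be joined to $\bar{I}$ by a length-minimizing geodesic, and that no conjugate points occur along the way (otherwise the relation $\mathcal{S}_{g}=\mathcal{P}_{g}-\tfrac{1}{2}r^{2}$ would fail to define a single-valued function of position). A clean way to handle this is to invoke the strict concavity of $\mathcal{S}_{g}$ established in Theorem 1 together with Axiom 4, which forces $\mathcal{S}_{g}\to-\infty$ at $\partial\mathcal{M}_{\theta}$ (the probability density and its derivatives vanish there only if $\mathcal{S}_{g}$ diverges, given the finite, positive-definite metric factor); the level sets of $\mathcal{S}_{g}$ are therefore compact and strictly geodesically convex about $\bar{I}$, so the exponential map at $\bar{I}$ is a diffeomorphism onto $\mathcal{M}_{\theta}$. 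This justifies the parametrization by a single radial coordinate $s=r(I|\theta)$ used above and completes the identification of the level sets with geodesic spheres.
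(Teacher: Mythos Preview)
Your argument is correct and reaches the result by a genuinely different route from the paper's. The paper does not start from an arbitrary minimizing geodesic and integrate the ODE $d^{2}\mathcal{S}_{g}/ds^{2}=-1$; instead it works with the unit vector field $\upsilon^{i}=\psi^{i}/\psi$ normal to the constant-entropy hypersurfaces, verifies directly that $\upsilon^{k}D_{k}\upsilon^{i}=0$ (so the integral curves of $\upsilon^{i}$ are themselves unit-speed geodesics emanating from $\bar{I}$), computes the entropy production along those curves to be $\Phi=-\psi$, and then invokes Axiom~3 together with the already-proved gaussian decomposition $\mathcal{S}_{g}=\mathcal{P}_{g}-\tfrac{1}{2}\psi^{2}$ to conclude $\psi(I|\theta)=D(I,\bar{I}|\theta)=r$. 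Your approach is more elementary in that it bypasses the gradient-flow computation and does not need Theorem~2 at all, relying only on the second-order ODE and the initial data at $\bar{I}$. The paper's approach, on the other hand, yields as a by-product the explicit identification of the radial geodesics with the gradient-flow lines of $\mathcal{S}_{g}$ and the identity $\psi=r$, both of which are used immediately afterward to set up the spherical representation in the next Corollary; it also sidesteps your completeness concern by \emph{constructing} the radial geodesics as integral curves of a globally defined vector field rather than postulating a minimizing geodesic from $\bar{I}$ to each point.
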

\begin{proof}
By definition, the vector field $\upsilon^{i}(I|\theta)$:
\begin{equation}\label{unitary}
\upsilon^{i}(I|\theta)=\frac{\psi^{i}(I|\theta)}{\psi(I|\theta)}
\end{equation}
is the unitary normal vector of the hyper-surface with constant
scalar entropy $\mathcal{S}_{g}(I|\theta)$. It is easy to verify
that the vector field $\upsilon^{i}(I|\theta)$ obeys the geodesic
equations (\ref{geodesic}):
\begin{equation}
\upsilon^{k}(I|\theta)D_{k}\upsilon^{i}(I|\theta)=\frac{\upsilon^{k}
(I|\theta)}{\psi(I|\theta)}\left[\delta^{i}_{k}-\upsilon^{i}(I|\theta)\upsilon_{k}(I|\theta)\right]=0.
\end{equation}
Hence, $\upsilon^{i}(I|\theta)$ can be regarded as the tangent
vector:
\begin{equation}\label{velocity}
\frac{dI^{i}_{g}(s|\mathbf{e})}{ds}=\upsilon^{i}[I_{g}(s|\mathbf{e})|\theta]
\end{equation}
of geodesic family $I_{g}(s|\mathbf{e})$ with unitary affine
parametrization centered at the point $\bar{I}$ with maximum scalar
entropy $\mathcal{S}_{g}(I|\theta)$,
$I_{g}(s=0|\mathbf{e})=\bar{I}$, where the parameters $\mathbf{e}$
distinguish geodesics with different directions at the origin. The
entropy production $\Phi(s|\mathbf{e})$ along any of these geodesics
is given by the norm of the gradiental generalized forces with
opposite sign:
\begin{equation}
\Phi(s|\mathbf{e})=-\frac{dI^{i}(s|\mathbf{e})}{ds}\psi_{i}\left[I_{g}(s|\mathbf{e})|\theta\right]=-\psi[I_{g}(s|\mathbf{e})|\theta].
\end{equation}
Considering Eq.(\ref{metric}), the norm $\psi(I|\theta)$ can be
related to the length $\Delta s$ of the geodesic connecting the
point $I$ with point $\bar{I}$ with maximum scalar entropy, that is,
the \textit{distance} $D(I,\bar{I}|\theta)$ between the points $I$
and $\bar{I}$:
\begin{equation}\label{radio}
\psi(I|\theta)=D(I,\bar{I}|\theta).
\end{equation}
According to the gaussian decomposition (\ref{Sdecomposition}), the
hyper-surface with constant scalar entropy
$\mathcal{S}_{g}(I|\theta)$ is also the hyper-surface where the norm
of gradiental generalized forces $\psi(I|\theta)$ is kept constant,
that is, the boundary of a $n$-dimensional sphere $S^{n}(\bar{I},r)$
centered at the point $\bar{I}$ with maximum entropy.
\end{proof}
\begin{corollary}[Completeness]
The knowledge of the metric tensor $g_{ij}(I|\theta)$ and the point
$\bar{I}$ with maximum scalar entropy $\mathcal{S}_{g}(I|\theta)$
determines the equilibrium distribution function
(\ref{EinsteinPostulate}).
\end{corollary}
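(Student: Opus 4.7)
The plan is to combine the decomposition of scalar entropy obtained in Theorem~\ref{entropy.distance} with the normalization of the equilibrium distribution. By (\ref{E1}), the scalar entropy admits the closed form
\begin{equation}
\mathcal{S}_{g}(I|\theta)=\mathcal{P}_{g}(\theta)-\frac{1}{2}D^{2}(I,\bar{I}|\theta),
\end{equation}
so the only freedom once $g_{ij}(I|\theta)$ and $\bar{I}$ are fixed is the single scalar $\mathcal{P}_{g}(\theta)$.

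First I would argue that the radial distance $r(I|\theta)=D(I,\bar{I}|\theta)$ is determined by the given data: with $g_{ij}(I|\theta)$ in hand, the Levi-Civita connection (\ref{Levi-Civita}) and hence the geodesic equations (\ref{geodesic}) are fixed, so integrating geodesics emanating from $\bar{I}$ yields a well-defined function $r(I|\theta)$ on $\mathcal{M}_{\theta}$. Likewise, the invariant volume element $d\mu(I|\theta)=\sqrt{|g_{ij}(I|\theta)/2\pi|}\,dI$ from (\ref{inv.volume}) depends only on the metric.

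Next, I would invoke normalization of the equilibrium distribution (\ref{EinsteinPostulate}). Substituting (\ref{E1}) gives
\begin{equation}
1=\int_{\mathcal{M}_{\theta}}\exp\!\left[\mathcal{P}_{g}(\theta)-\tfrac{1}{2}r^{2}(I|\theta)\right]d\mu(I|\theta),
\end{equation}
which uniquely fixes the gaussian Planck potential as
\begin{equation}
\mathcal{P}_{g}(\theta)=-\log\int_{\mathcal{M}_{\theta}}\exp\!\left[-\tfrac{1}{2}r^{2}(I|\theta)\right]d\mu(I|\theta),
\end{equation}
in consistency with (\ref{gaussianPlanck}). Plugging this back into (\ref{EinsteinPostulate}) yields the explicit form of $dp(I|\theta)$ entirely in terms of $g_{ij}(I|\theta)$ and $\bar{I}$, which is the claimed completeness.

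The routine pieces are normalization and substitution; the genuine work has already been done in Theorems above. The only conceptual point worth flagging is that $r(I|\theta)$ must be a well-defined, single-valued, finite function on all of $\mathcal{M}_{\theta}$: this is guaranteed because, by the maximum theorem, $\bar{I}$ is the unique stationary point, and the geodesic family $I_{g}(s|\mathbf{e})$ constructed in the proof of Theorem~\ref{entropy.distance} foliates $\mathcal{M}_{\theta}$ through $\bar{I}$ via the identification $\psi(I|\theta)=D(I,\bar{I}|\theta)$ of Eq.~(\ref{radio}), so no cut-locus ambiguity arises and the integral defining $\mathcal{P}_{g}(\theta)$ converges in view of the gaussian suppression and the boundary conditions (\ref{boundary.cond}).
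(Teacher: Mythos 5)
Your argument is correct and coincides with the paper's intended reasoning: the corollary follows from Theorem~\ref{entropy.distance}'s decomposition $\mathcal{S}_{g}=\mathcal{P}_{g}(\theta)-\frac{1}{2}r^{2}(I|\theta)$, with $r(I|\theta)$ and $d\mu(I|\theta)$ fixed by the metric and $\bar{I}$, and $\mathcal{P}_{g}(\theta)$ fixed by normalization exactly as in the gaussian representation (\ref{universal})--(\ref{gaussianPlanck}). The paper states the corollary without a separate proof, but the subsequent commentary makes the same point you do, so your write-up simply makes explicit what the paper leaves implicit.
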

\begin{observation}
The radio $r(I|\theta)$ of the $n$-dimensional sphere
$S^{n}(\bar{I},r)$ of the \textbf{theorem \ref{entropy.distance}}
and the invariant volume element $d\mu(I|\theta)$ are purely
geometric notions derived from the knowledge of the metric tensor
$g_{ij}(I|\theta)$ and the point $\bar{I}$  with maximum scalar
entropy $\mathcal{S}_{g}(I|\theta)$. Therefore, \textit{the scalar
entropy $\mathcal{S}_{g}(I|\theta)$ and all its associated
thermo-statistical quantities represents geometric notions derived
from the Riemannian structure of the manifold
$\mathcal{M}_{\theta}$}.
\end{observation}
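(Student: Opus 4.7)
The plan is to show that the gaussian/spherical representation already developed in Section~3 lets one reconstruct $\mathcal{S}_{g}(I|\theta)$, and hence $dp(I|\theta)$, from the two data $g_{ij}(I|\theta)$ and $\bar{I}$ alone. First, from the metric tensor I would extract all the purely Riemannian ingredients it determines: the Levi-Civita connection via Eq.~(\ref{Levi-Civita}), the geodesic equations~(\ref{geodesic}), and the invariant volume element $d\mu(I|\theta)$ from Eq.~(\ref{inv.volume}). None of these requires knowing the scalar entropy.

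Next, I would use the distinguished point $\bar{I}$ as the origin of a family of geodesics with unitary affine parametrization, as in the proof of Theorem~\ref{entropy.distance}. For any $I\in\mathcal{M}_{\theta}$, the distance $r(I|\theta)=D(I,\bar{I}|\theta)$ is obtained by integrating~(\ref{distance}) along the geodesic connecting $\bar{I}$ to $I$, and this construction uses only $g_{ij}$ and $\bar{I}$. Invoking Theorem~\ref{entropy.distance} I can then write
\begin{equation}
\mathcal{S}_{g}(I|\theta)=\mathcal{P}_{g}(\theta)-\tfrac{1}{2}\,r^{2}(I|\theta),
\end{equation}
so the only remaining unknown is the scalar $\mathcal{P}_{g}(\theta)$.

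To fix $\mathcal{P}_{g}(\theta)$ I would appeal to normalization of~(\ref{EinsteinPostulate}) together with the gaussian representation~(\ref{universal}) and~(\ref{gaussianPlanck}):
\begin{equation}
\mathcal{P}_{g}(\theta)=-\log\mathcal{Z}_{g}(\theta)=-\log\int_{\mathcal{M}_{\theta}}\exp\!\left[-\tfrac{1}{2}r^{2}(I|\theta)\right]d\mu(I|\theta).
\end{equation}
Since the right-hand side is built entirely from $g_{ij}(I|\theta)$ and $\bar{I}$, this determines $\mathcal{P}_{g}(\theta)$, hence $\mathcal{S}_{g}(I|\theta)$, and therefore $dp(I|\theta)=\exp[\mathcal{S}_{g}(I|\theta)]d\mu(I|\theta)$.

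The main obstacle is showing that the construction is globally well-posed: that the exponential map based at $\bar{I}$ covers $\mathcal{M}_{\theta}$ (so $r(I|\theta)$ is defined for all admissible $I$), and that the normalization integral $\mathcal{Z}_{g}(\theta)$ converges. The first issue is controlled by the positive-definiteness of $g_{ij}$ on $\mathcal{M}_{\theta}$ proved in Theorem~1 together with the uniqueness of the stationary point $\bar{I}$ established just above, which rules out conjugate points that could break the one-to-one identification of $I$ with its radial geodesic coordinates. The second issue follows from Axiom~4, whose boundary conditions~(\ref{boundary.cond}) force the gaussian factor $\exp[-\tfrac{1}{2}r^{2}(I|\theta)]\sqrt{|g_{ij}/2\pi|}$ to decay sufficiently on $\partial\mathcal{M}_{\theta}$ to make $\mathcal{Z}_{g}(\theta)$ finite. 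Once these two points are justified, the completeness statement is immediate.
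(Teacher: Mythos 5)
Your proposal is correct and follows essentially the same route the paper intends: the observation is presented there as an immediate consequence of Theorem \ref{entropy.distance}, whose decomposition $\mathcal{S}_{g}=\mathcal{P}_{g}(\theta)-\frac{1}{2}r^{2}(I|\theta)$ with $r(I|\theta)=D(I,\bar{I}|\theta)$, combined with the gaussian representation (\ref{universal})--(\ref{gaussianPlanck}) fixing $\mathcal{P}_{g}(\theta)$ by normalization, is exactly the reconstruction you spell out. Your closing remarks on global well-posedness (coverage by the exponential map at $\bar{I}$ and convergence of $\mathcal{Z}_{g}(\theta)$) go beyond what the paper discusses but do not change the argument.
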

\begin{corollary}
The equilibrium distribution function (\ref{EinsteinPostulate}) can
be expressed as follows:
\begin{equation}\label{polar}
dp(r,q|\theta)=\frac{1}{\mathcal{Z}_{g}(\theta)}\frac{1}{\sqrt{2\pi}}\exp\left(-\frac{1}{2}r^{2}\right)
drd\Sigma_{g}(q|r,\theta)
\end{equation}
where $d\Sigma(q|r,\theta)$ is the hyper-surface element:
\begin{equation}
d\Sigma_{g}(q|r,\theta)=\sqrt{\left|\frac{g_{\alpha\beta}(r,q|\theta)}{2\pi}\right|}dq.
\end{equation}
obtained from the metric tensor $g_{\alpha\beta}(r,q|\theta)$
associated with the \textbf{projected Riemannian structure} on the
hyper-surface $\partial S^{(n)}(\bar{I}|r)$ of the n-dimensional
sphere $S^{(n)}(\bar{I}|r)\subset\mathcal{M}_{\theta}$.
\end{corollary}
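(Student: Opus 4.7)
The plan is to change to geodesic polar coordinates about the point $\bar I$ of maximum entropy and then observe that the gaussian representation (\ref{universal}) collapses to the claimed form once the volume element factorizes. Concretely, I would fix a chart $(r,q)$ on $\mathcal{M}_\theta\setminus\{\bar I\}$ in which $r=D(I,\bar I|\theta)$ is the geodesic distance and the remaining coordinates $q^\alpha$ label the direction $\mathbf{e}$ of the geodesic family $I_g(s|\mathbf{e})$ constructed in the proof of \textbf{Theorem \ref{entropy.distance}}. By that theorem, $\psi(I|\theta)=r$, so the exponential weight $\exp[-\tfrac12\psi^2]$ in (\ref{universal}) becomes simply $\exp(-\tfrac12 r^2)$.

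The central geometric step is to invoke the Riemannian \emph{Gauss lemma}: since $r$ is arc-length along the geodesic family emanating from $\bar I$, and by construction (see (\ref{unitary})--(\ref{velocity})) the radial tangent $\upsilon^i=\psi^i/\psi$ is a unit vector normal to every hyper-surface of constant $r$, the metric takes the block form
\begin{equation}
ds^{2}=dr^{2}+g_{\alpha\beta}(r,q|\theta)\,dq^{\alpha}dq^{\beta},
\end{equation}
with $g_{\alpha\beta}$ being precisely the induced (projected) metric on $\partial S^{(n)}(\bar I|r)$. Consequently the full metric determinant factorizes, $|g_{ij}(I|\theta)|=|g_{\alpha\beta}(r,q|\theta)|$, and the invariant volume element splits as
\begin{equation}
d\mu(I|\theta)=\sqrt{\left|\frac{g_{ij}(I|\theta)}{2\pi}\right|}\,dI=\frac{1}{\sqrt{2\pi}}\,dr\;\sqrt{\left|\frac{g_{\alpha\beta}(r,q|\theta)}{2\pi}\right|}\,dq=\frac{dr\,d\Sigma_g(q|r,\theta)}{\sqrt{2\pi}}.
\end{equation}

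Substituting this factorization and $\psi^2=r^2$ into the gaussian representation (\ref{universal}) yields (\ref{polar}) immediately. The main obstacle is really the geometric one: justifying the orthogonality that powers Gauss's lemma. However, this orthogonality is already essentially present in the preceding theorem, since $\upsilon^i$ was shown both to be the unit normal to the constant-entropy (hence constant-$r$) hyper-surfaces and to be the tangent of the radial geodesics with unitary affine parametrization; one only has to remark that the vanishing of $\psi(\bar I|\theta)=0$ at the origin plus completeness of the radial geodesics (guaranteed on the interior of $\mathcal{M}_\theta$ since $\mathcal{S}_g$ is strictly concave with a unique maximum) ensures that geodesic polar coordinates are a genuine chart covering $\mathcal{M}_\theta\setminus\{\bar I\}$, with the single-point exception being of measure zero in the distribution (\ref{EinsteinPostulate}).
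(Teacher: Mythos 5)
Your proposal is correct and follows essentially the same route as the paper: both pass to geodesic spherical coordinates $(r,q)$ centered at the maximum-entropy point $\bar I$, use the orthogonality of the radial vector $\upsilon^i$ to the constant-entropy hyper-surfaces (the content of Gauss's lemma, which the paper states directly via the normal/tangential character of $\upsilon^i$ and $\tau^i_\alpha$) to obtain $g_{rr}=1$, $g_{r\alpha}=0$, $g_{\alpha\beta}=g_{ij}\tau^i_\alpha\tau^j_\beta$, and then rewrite the gaussian representation with $\psi=r$. Your explicit remarks on the determinant factorization and on the polar chart covering $\mathcal{M}_\theta\setminus\{\bar I\}$ up to a measure-zero set are details the paper leaves implicit, but they do not change the argument.
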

\begin{proof}
Let us consider the geodesic family $I_{g}(s;\mathbf{e})$ derived as
a solution of the problem (\ref{velocity}). The quantities
$\mathbf{e}=\left\{e^{i}\right\}$ represent the asymptotic values of
the unitary vector field $\upsilon^{i}(I|\theta)$ at the origin:
\begin{equation}
e^{i}=\lim_{s\rightarrow 0}\dot{I}^{i}_{g}(s;\mathbf{e}),
\end{equation}
which characterize the direction of the given geodesic
$I_{g}(s;\mathbf{e})$ at the point $\bar{I}$. These vectors can be
parameterized as $\mathbf{e}=\mathbf{e}(q)$ using the intersection
point $q$ of the geodesic $I_{g}(s;\mathbf{e})$ with the boundary
$\partial S^{n}(\bar{I}|r)$ of a n-dimensional sphere
$S^{n}(\bar{I}|r)\subset\mathcal{M}_{\theta}$ of the \textbf{theorem
\ref{entropy.distance}}, which shall be hereafter referred to as the
\textbf{spherical coordinates}. One can employ the quantities
$\rho=(r,q)$ to introduce a \textbf{spherical representation}
$\mathcal{R}_{\rho}$ centered at the point $\bar{I}$ with maximum
scalar entropy $\mathcal{S}_{g}(I|\theta)$. The coordinate
reparametrization
$\rho(I):\mathcal{R}_{I}\rightarrow\mathcal{R}_{\rho}$ is given by
the geodesic family $I=I^{i}_{g}[r|\mathbf{e}(q)]$:
\begin{equation}
\upsilon^{i}(I|\theta)=\frac{\partial
I^{i}_{g}[r|\mathbf{e}(q)]}{\partial
r},\tau^{i}_{\alpha}(I|\theta)=\frac{\partial
I^{i}_{g}[r|\mathbf{e}(q)]}{\partial q^{\alpha}}.
\end{equation}
The new $n-1$ vector fields $\tau^{i}_{\alpha}(I|\theta)$ are
perpendicular to the unitary vector field $\upsilon_{i}(I|\theta)$,
$\upsilon_{i}(I|\theta)\tau^{i}_{\alpha}(I|\theta)=0$, since the
vectors $\upsilon^{i}(I|\theta)$ and $\tau^{i}_{\alpha}(I|\theta)$
are respectively normal and tangential to the boundary  $\partial
S^{(n)}(\bar{I}|r)$ of the n-dimensional sphere
$S^{(n)}(\bar{I}|r)$. Consequently, the components of the metric
tensor in this spherical coordinate representation are given by:
\begin{eqnarray}\nonumber
&g_{rr}(r,q|\theta)=1, g_{r\alpha}(r,q|\theta)=g_{\alpha
r}(r,q|\theta)=0,\\
&g_{\alpha\beta}(r,q|\theta)=g_{ij}(I|\theta)\tau^{i}_{\alpha}(I|\theta)\tau^{j}_{\beta}(I|\theta),
\end{eqnarray}
whose components $g_{\alpha\beta}(r,q|\theta)$ describe projected
Riemannian structure on the boundary $\partial S^{(n)}(\bar{I}|r)$
of the n-dimensional sphere $S^{(n)}(\bar{I}|r)$. Eq.(\ref{polar})
is just the expression of the distribution function
(\ref{EinsteinPostulate}) in the spherical representation
$\mathcal{R}_{\rho}$.
\end{proof}

\section{Statistical curvature}\label{CurvatureSect}

\begin{definition}
A set of stochastic variables $I$ exhibits a \textbf{reducible
statistical dependence} when there exists at least a coordinate
representation $\mathcal{R}_{I}$ of the manifold
$\mathcal{M}_{\theta}$ where the distribution function
$dp(I|\theta)$ can be factorized into independent distribution
functions
$dp^{(i)}(I^{(i)}|\theta)=\rho^{(i)}(I^{(i)}|\theta)dI^{i}$ as
follows:
\begin{equation}\label{factorization}
dp(I|\theta)=\prod^{n}_{i=1}dp^{(i)}(I^{(i)}|\theta)
\end{equation}
for any $I\in\mathcal{M}_{\theta}$. Conversely, the set of
stochastic variables $I$ exhibits an \textbf{irreducible statistical
dependence}.
\end{definition}
\begin{example}
The stochastic variables $X$ and $Y$ described by the distribution
function:
\begin{equation}\label{example.ind}
dp(X,Y)=A\exp\left[-X^{2}-Y^{2}-XY\right]dXdY
\end{equation}
are statistical dependent. However, this same distribution function
can be decomposed into independent distributions:
\begin{equation}
dp(\zeta,\eta)=\sqrt{\frac{3}{2\pi}}\exp\left[-\frac{3}{2}
\zeta^{2}\right]d\zeta\frac{1}{\sqrt{2\pi}}\exp\left[-\frac{1}{2}\eta^{2}\right]
d\eta
\end{equation}
considering the the following coordinate reparametrizations:
\begin{equation}
X=\frac{1}{\sqrt{2}}(\zeta+\eta),\:Y=\frac{1}{\sqrt{2}}(\zeta-\eta).
\end{equation}
This fact evidences that the distribution function
(\ref{example.ind}) exhibits a reducible statistical dependence.
Clearly, the correlation functions:
\begin{equation}\label{correlation}
cov(X,Y)=\left\langle(X-\left\langle X\right\rangle)(Y-\left\langle
Y\right\rangle)\right\rangle
\end{equation}
cannot provide an \textbf{absolute measure of statistical
dependence}.
\end{example}
\begin{definition}
A manifold $\mathcal{M}_{\theta}$ is \textbf{flat} when its
geometric properties are fully equivalent to the $n$-dimensional
Euclidean space $\mathbb{R}^{n}$. Otherwise, the manifold is said to
be \textbf{curved}.
\end{definition}
\begin{theorem}
The flat character of the manifold $\mathcal{M}_{\theta}$ implies
the existence of a reducible statistical dependence among
macroscopic observables $I$, while its curved character implies the
existence of an irreducible statistical dependence.
\end{theorem}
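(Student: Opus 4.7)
The plan is to recognize that the two implications of the theorem together amount to the equivalence: $\mathcal{M}_{\theta}$ is flat if and only if the statistical dependence among the observables $I$ is reducible. I would therefore prove the forward implication directly and handle the second by its contrapositive, namely \emph{reducible $\Rightarrow$ flat}.

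For \emph{flat $\Rightarrow$ reducible}, I would invoke the definition of flatness to obtain global Cartesian coordinates $X^{i}$ in which $g_{ij}=\delta_{ij}$ throughout $\mathcal{M}_{\theta}$, and then translate the origin to the unique entropy maximum $\bar{I}$ supplied by the preceding maximum theorem. The geodesic distance from $\bar{I}$ in these coordinates is the Euclidean norm $r^{2}=\sum_{i}(X^{i})^{2}$, so Theorem~\ref{entropy.distance} yields $\mathcal{S}_{g}(X|\theta)=\mathcal{P}_{g}(\theta)-\tfrac{1}{2}\sum_{i}(X^{i})^{2}$. The invariant volume element (\ref{inv.volume}) reduces to $(2\pi)^{-n/2}dX$, so (\ref{EinsteinPostulate}) becomes
\[
dp(X|\theta)=e^{\mathcal{P}_{g}(\theta)}\prod_{i=1}^{n}\frac{1}{\sqrt{2\pi}}\exp\!\left[-\tfrac{1}{2}(X^{i})^{2}\right]dX^{i},
\]
and normalization forces $\mathcal{P}_{g}(\theta)=0$. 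This is precisely the factorization (\ref{factorization}), so the dependence is reducible.

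For \emph{reducible $\Rightarrow$ flat}, I would assume the distribution factorizes in some coordinates $I^{i}$ and then apply the inverse-CDF reparametrization $Y^{i}=\Phi^{-1}\!\bigl(F^{(i)}(I^{i})\bigr)$ coordinate by coordinate, where $\Phi$ is the standard normal CDF and $F^{(i)}$ is the CDF of $\rho^{(i)}$. Each factor becomes a standard Gaussian, and the joint density in the new coordinates is $(2\pi)^{-n/2}\exp[-\tfrac{1}{2}\sum_{i}(Y^{i})^{2}]$. I would then argue that this density, viewed together with the axioms, forces $g_{ij}(Y|\theta)=\delta_{ij}$: direct substitution shows that the Euclidean metric with $\mathcal{S}_{g}=-\tfrac{1}{2}\sum_{i}(Y^{i})^{2}$ satisfies (\ref{ScalarEntropy}) and (\ref{cov.EH}), and the regularity clause appended to the corollary of Theorem~1 promotes this to the unique admissible solution of (\ref{cov.equation}). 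A vanishing curvature tensor follows immediately, so $\mathcal{M}_{\theta}$ is flat.

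The main obstacle I anticipate is the uniqueness step in the second direction. Because (\ref{cov.equation}) is nonlinear and couples all components of $g_{ij}$, merely exhibiting $\delta_{ij}$ as a solution does not by itself rule out competing curved metrics compatible with the same Gaussian density. A cleaner route may be to exploit the spherical representation (\ref{polar}): the product-of-Gaussians factorization fixes the angular measure on every geodesic sphere $\partial S^{n}(\bar{I},r)$ to its Euclidean form, and the identity $\psi=r$ from Theorem~\ref{entropy.distance} together with the decomposition (\ref{Sdecomposition}) then pins the whole metric down to $\delta_{ij}$. A secondary technical point is to check that the inverse-CDF map is a global diffeomorphism on $\mathcal{M}_{\theta}$, which should follow from Axiom~4 and the smoothness assumed on $\rho$.
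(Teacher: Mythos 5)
Your first implication (flat $\Rightarrow$ reducible) is essentially the paper's own argument: pass to coordinates where $g_{ij}=\delta_{ij}=-\partial^{2}\mathcal{S}_{g}/\partial I^{i}\partial I^{j}$ and read off the product of standard Gaussians as in Eq.~(\ref{GaussianEuclidean}); that part is fine. The problem is the second implication, and it is exactly the gap you flagged yourself: after the inverse-CDF reparametrization you control only the probability density, not the metric. In this framework the metric is the primitive object (Axiom 1) and the scalar entropy is derived from $\rho$ and $g$ through Eq.~(\ref{ScalarEntropy}); under your map $Y(I)$ the given metric simply transforms covariantly, and there is no reason it becomes $\delta_{ij}$ merely because $\rho(Y|\theta)$ is a standard Gaussian. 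Exhibiting the pair $g_{ij}=\delta_{ij}$, $\mathcal{S}_{g}=-\frac{1}{2}\sum_{i}(Y^{i})^{2}$ as one solution of Eq.~(\ref{cov.equation}) does not finish the job: the ``admissible solutions'' clause in the corollary is a regularity requirement, not a uniqueness theorem, and neither the paper nor your proposal rules out a curved metric whose determinant factor $\sqrt{\left|g_{ij}/2\pi\right|}$ and entropy conspire, via Eqs.~(\ref{ScalarEntropy}) and (\ref{cov.EH}), to reproduce the same Gaussian density. Hence ``reducible $\Rightarrow$ flat'' is not established, and with it the contrapositive ``curved $\Rightarrow$ irreducible'' remains unproved in your write-up.

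The paper sidesteps this uniqueness question entirely by arguing in the forward direction in geodesic normal/spherical coordinates: expanding the projected metric on the geodesic spheres $\partial S^{(n)}(\bar{I},r)$ it obtains the asymptotic form (\ref{asymptotic}), where a nonvanishing curvature tensor produces the coupling term $-\frac{1}{24}r^{2}\mathcal{F}(q|\theta)$ between $r$ and $q$, with $\mathcal{F}$ given by Eq.~(\ref{spherical}); since $r$ is the invariant geodesic distance to $\bar{I}$ (Theorem \ref{entropy.distance}), reparametrizations act only on the coordinates $q$ and cannot remove this coupling, whence curvature forces irreducibility. Your fallback sketch (``the factorization fixes the angular measure to its Euclidean form, and $\psi=r$ with the decomposition (\ref{Sdecomposition}) pins the metric to $\delta_{ij}$'') points in this direction but is not yet an argument: a hypothetical factorization lives in arbitrary coordinates whose relation to $(r,q)$ you do not control, so you would still have to show that it is incompatible with the $\mathcal{F}$-term, i.e.\ you would end up redoing the paper's curvature expansion. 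Either carry out that expansion and make the incompatibility explicit, or prove a genuine rigidity/uniqueness statement for the nonlinear system (\ref{cov.equation}); as written, the second half of your proof does not go through.
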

\begin{proof}
According to Riemannian geometry, the flat or curved character of
the manifold $\mathcal{M}_{\theta}$ is unambiguously determined by
the curvature tensor $R_{ijkl}(I|\theta)$. In fact, the manifold
$\mathcal{M}_{\theta}$ is flat whenever the curvature tensor
$R^{l}_{ijk}(I|\theta)$ vanishes for any $I\in\mathcal{M}_{\theta}$
\cite{Berger}. The metric tensor $g_{ij}(I|\theta)$ of an Euclidean
manifold $\mathcal{M}_{\theta}$ can be written in terms of delta
Kronecker matrix $\delta_{ij}$:
\begin{equation}
g_{ij}(I|\theta)=\delta_{ij}=-\partial^{2}\mathcal{S}%
_{g}(I|\theta)/\partial I^{i}\partial I^{j}
\end{equation}
using an appropriate coordinate representation $\mathcal{R}_{I}$.
Such a metric tensor leads to the following decomposition of the
equilibrium distribution function $dp(\Theta|\theta)$:
\begin{equation}  \label{GaussianEuclidean}
dp(I|\theta)=\prod_{i}\exp\left\{-\left[I^{i}-\bar{I}%
^{i}(\theta) \right]^{2}/2\right\}\frac{dI^{i}}{\sqrt{2\pi}}
\end{equation}
into gaussian distributions, which evidences the reducible character
of the statistical dependence among macroscopic observables $I$ for
the case of a flat manifold.

To show the relationship between curved character of the manifold
$\mathcal{M}_{\theta}$ and the irreducible statistical dependence
among the macroscopic observables $I$, let us consider the spherical
representation (\ref{polar}) of the distribution function
(\ref{EinsteinPostulate}). The demonstration reduces to verify the
irreducible coupling between the radial coordinate $r$ and the
spherical coordinates $q$ for any coordinate representation
$\mathcal{R}_{q}$ of the boundary $\partial S^{(n)}(\bar{I},r)$. The
analysis can be restricted to the asymptotic behavior of the
distribution functions for $r$ small, that is, within the
neighborhood of the point $\bar{I}$ with maximum scalar entropy
$\mathcal{S}_{g}(I|\theta)$.

Let us consider a coordinate representation
$\mathcal{R}_{\mathbf{x}}$ of the manifold $\mathcal{M}_{\theta}$
where the point $\bar{I}$ corresponds to the origin of the
coordinate frame $\mathbf{x}=0$, and the first partial derivatives
of the metric tensor vanish at the origin. Moreover, let us assume
the notation $\bar{A}=A(\mathbf{x}=0|\theta)$ to simplify the
mathematical expressions. The parametric family of geodesic
$\mathbf{x}_{g}(s|\mathbf{e})$ admits an approximate solution in
terms of power-series of the unitary affine parameter $s$:
\begin{equation}
x^{i}_{g}(s|\mathbf{e})=e^{i}(q)s-\frac{1}{6}s^{3}\partial_{l}\bar{\Gamma}^{i}_{jk}e^{j}(q)e^{k}(q)e^{l}(q)+O(s^{3}),
\end{equation}
where $\partial_{l}\bar{\Gamma}^{i}_{jk}$ is the partial derivative
of the affine connection at the origin:
\begin{equation}
\partial_{l}\bar{\Gamma}^{i}_{jk}=\frac{1}{2}\bar{g}^{im}\partial_{l}
\left[\partial_{j}\bar{g}_{mk}+\partial_{k}\bar{g}_{mj}-\partial_{m}\bar{g}_{kl}\right].
\end{equation}
Introducing the quantities $\xi^{i}_{\alpha}(q)$:
\begin{equation}
\xi^{i}_{\alpha}(q)=\frac{\partial e^{i}(q)}{\partial q^{\alpha}},
\end{equation}
the components of the projected metric tensor $g_{\alpha\beta}(r,q)$
on the boundary $\partial S^{(n)}(\bar{I},r)$ are given by:
\begin{equation}
g_{\alpha\beta}(r,q)=r^{2}\kappa_{\alpha\beta}(q)-
\frac{1}{12}r^{4}\bar{R}_{ijkl}X^{ij}_{\alpha}(q)X^{kl}_{\beta}(q)+...,
\end{equation}
where
$\kappa_{\alpha\beta}(q)=\bar{g}_{ij}\xi^{i}_{\alpha}(q)\xi^{j}_{\beta}(q)$
and $\bar{R}_{ijkl}$ is the value curvature tensor
(\ref{curvature.2}) at the origin, while the quantities
$X^{ij}_{\alpha}(q)$ are defined as:
\begin{equation}
X^{ij}_{\alpha}(q)=e^{i}(q)\xi^{j}_{\alpha}(q)-e^{j}(q)\xi^{i}_{\alpha}(q).
\end{equation}
The previous expressions lead to the following approximation for the
spherical representation of distribution function:
\begin{equation}\label{asymptotic}
dp(r,q|\theta)=\frac{1}{\mathcal{Z}_{g}(\theta)}d\varrho(r)\left[1-\frac{1}{24}r^{2}\mathcal{F}(q|\theta)+O(r^{2})\right]d\Omega(q).
\end{equation}
Here, $d\varrho(r)$ and $d\Omega(q)$ are two independent normalized
distribution functions:
\begin{eqnarray}
d\varrho(r)=\frac{1}{2^{\frac{n}{2}-1}\Gamma\left(\frac{n}{2}\right)}\exp\left(-\frac{1}{2}r^{2}\right)r^{n-1}dr,
\\
d\Omega(q)=2^{\frac{n}{2}-1} \Gamma\left(\frac{n}{2}\right)
\sqrt{\left|\frac{\kappa_{\alpha\beta}(q)}{2\pi}\right|}\frac{dq}{\sqrt{2\pi}},
\end{eqnarray}
while $\mathcal{F}(q|\theta)$ is a function on the spherical
coordinates $q$ that arises as a consequence of non-vanishing
curvature tensor $\bar{R}_{ijkl}$:
\begin{equation}\label{spherical}
\mathcal{F}(q|\theta)=\bar{R}_{ijkl}\kappa^{\alpha\beta}(q)X^{ij}_{\alpha}(q)X^{kl}_{\beta}(q).
\end{equation}
According to Eq.(\ref{asymptotic}), the curved character of the
manifold $\mathcal{M}_{\theta}$ does not allow to decouple the
radial coordinate $r$ and the spherical coordinates $q$. Such a
coupling is irreducible since the coordinate reparametrizations of
the manifold $\mathcal{M}_{\theta}$ only affect spherical
coordinates $q$ in the framework of the spherical representation of
the distribution function (\ref{polar}).
\end{proof}
\begin{corollary}
The statistical curvature tensor $R_{ijkl}(I|\theta)$ allows to
introduce some local and global invariant measures to characterize
both the intrinsic curvature of the manifold $\mathcal{M}_{\theta}$
as well as the existence of an irreducible statistical dependence
among the macroscopic variables $I$. They are the \textbf{curvature
scalar} $R(I|\theta)$ introduced in Eq.(\ref{scalar.curvature}), the
\textbf{spherical curvature scalar} $\Pi(r,q|\theta)$:
\begin{equation}
\Pi(r,q|\theta)=g^{\alpha\beta}(r,q|\theta)R_{ijkl}(r,q|\theta)X^{ij}_{\alpha}(r,q|\theta)X^{kl}_{\alpha}(r,q|\theta)
\end{equation}
with $X^{ij}_{\alpha}(r,q|\theta)$ being:
\begin{equation}
X^{ij}_{\alpha}(r,q|\theta)=\upsilon^{i}(r,q|\theta)\tau^{j}_{\alpha}(r,q|\theta)-\upsilon^{j}(r,q|\theta)\tau^{i}_{\alpha}(I|\theta),
\end{equation}
which arises as a local measure of the coupling between the radial
$r$ and the spherical coordinates $q$ in the spherical
representation of the distribution function (\ref{polar}), and
finally, the \textbf{gaussian Planck potential}
$\mathcal{P}_{g}(\theta)$ introduced in Eq.(\ref{gaussianPlanck}),
which arises as a global invariant measure of the curvature of the
manifold $\mathcal{M}_{\theta}$.
\end{corollary}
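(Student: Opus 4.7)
The plan is to verify, for each of the three proposed quantities, two things: (i) coordinate invariance in the appropriate sense, and (ii) the fact that they vanish exactly when the relevant curvature obstruction vanishes, so that by the preceding theorem they indeed measure irreducible statistical dependence. Since the theorem immediately above establishes the equivalence ``curved $\Leftrightarrow$ irreducible dependence,'' the task reduces to showing that each of $R(I|\theta)$, $\Pi(r,q|\theta)$, and $\mathcal{P}_g(\theta)$ faithfully detects that curvature, at the local or global level stated.

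First I would dispose of the curvature scalar $R(I|\theta)$. This is immediate from the cited result of Berger: $R(I|\theta)$ is the unique invariant built from the metric and its first two derivatives, it is a scalar under reparametrizations $\Theta(I)$, and it vanishes on a flat manifold. Combined with the previous theorem, non-vanishing $R(I|\theta)$ at some $I$ locally forces the irreducibility of the statistical dependence in any coordinate representation.

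Next I would treat the spherical curvature scalar $\Pi(r,q|\theta)$. The strategy is to recognize it as the covariant globalization of the function $\mathcal{F}(q|\theta)$ that appeared in Eq.(\ref{spherical}) within the asymptotic expansion Eq.(\ref{asymptotic}). The key step is to check that $X^{ij}_\alpha(r,q|\theta)=\upsilon^i\tau^j_\alpha-\upsilon^j\tau^i_\alpha$ transforms as a $(2,0)$-tensor under the reparametrizations of the spherical coordinates $q$ that leave the spherical representation $\mathcal{R}_\rho$ intact, so that the contraction $g^{\alpha\beta}R_{ijkl}X^{ij}_\alpha X^{kl}_\beta$ is a scalar on $\partial S^{(n)}(\bar{I},r)$. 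One then expands near $r=0$ using the normal-coordinate expansion already displayed in the proof of the previous theorem to confirm that $\Pi(r,q|\theta)$ reproduces $\mathcal{F}(q|\theta)$ to leading order, and hence measures precisely the obstruction to decoupling the radial and angular sectors of $dp(r,q|\theta)$.

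Finally, for the gaussian Planck potential $\mathcal{P}_g(\theta)$, I would use the normalization of Eq.(\ref{polar}) together with the identification of Eq.(\ref{gaussianPlanck}). Integrating $dp(r,q|\theta)$ over all $\mathcal{M}_\theta$ and comparing with the flat-case normalization of Eq.(\ref{GaussianEuclidean}) which yields Eq.(\ref{estimation}), any non-zero value of $\mathcal{P}_g(\theta)$ reflects a genuine departure of $d\mu(I|\theta)$ from the Euclidean product measure, a departure that is global precisely because $\mathcal{P}_g(\theta)$ depends only on the control parameters. The main obstacle I anticipate is the second step: carefully checking that $\Pi(r,q|\theta)$ is genuinely independent of the chosen parametrization $q$ of $\partial S^{(n)}(\bar{I},r)$ and that its vanishing is equivalent to flatness restricted to the two-planes spanned by $\upsilon$ and $\tau_\alpha$. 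This is a bookkeeping exercise with the antisymmetries of $R_{ijkl}$ and the orthogonality $\upsilon_i\tau^i_\alpha=0$, but it is where the proof has real content, the other two invariants being almost tautological once the previous theorem is in hand.
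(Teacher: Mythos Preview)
Your proposal is correct and follows essentially the same route as the paper: the curvature scalar is disposed of by citing its uniqueness as an invariant, the spherical curvature scalar is introduced as the covariant extension of the function $\mathcal{F}(q|\theta)$ from the asymptotic expansion, and the gaussian Planck potential is treated via the normalization of the spherical representation compared against the flat case. The only notational difference is that the paper packages the last argument through an explicit geometric ratio $\sigma(r|\theta)=\Sigma_g(r|\theta)/\Sigma_{\mathrm{flat}}(r|\theta)$ of sphere areas, which makes the claim $\mathcal{Z}_g(\theta)=1$ for flat $\mathcal{M}_\theta$ immediate without appealing to the approximate estimate you cite.
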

\begin{proof}
As already commented, the curvature scalar is the only invariant
associated with the first and second partial derivatives of the
metric tensor $g_{ij}(I|\theta)$. The consideration of the spherical
representation of the distribution function (\ref{polar}) allows to
introduce the normal $\upsilon^{i}(r,q|\theta)$ and tangential
vectors $\tau^{i}_{\alpha}(r,q|\theta)$, as well as the projected
metric tensor
$g_{\alpha\beta}(r,q|\theta)=g_{ij}(r,q|\theta)\tau^{i}_{\alpha}(r,q|\theta)\tau^{j}_{\beta}(r,q|\theta)$
associated with the constant scalar entropy hyper-surface $\partial
S^{(n)}(\bar{I}|r)$. This framework leads to introduce the spherical
curvature scalar $\Pi(r,q|\theta)$ as a direct generalization of the
spherical function $\mathcal{F}(q|\theta)$ of the asymptotic
distribution function (\ref{asymptotic}).

The role of the gaussian Planck potential $\mathcal{P}_{g}(\theta)$
as a global invariant measure of the curvature of the manifold
$\mathcal{M}_{\theta}$ can be easily evidenced starting from the
spherical representation of the distribution function (\ref{polar}).
Integrating over the spherical coordinates $q$, one obtains the
following expression for the gaussian partition function:
\begin{equation}\label{normalization}
\mathcal{Z}_{g}(\theta)=\frac{1}{\sqrt{2\pi}}\int^{+\infty}_{0}\exp\left(-\frac{1}{2}r^{2}\right)\Sigma_{g}(r|\theta)dr,
\end{equation}
where $\Sigma_{g}(r|\theta)$ denotes the invariant area of the
constant scalar entropy hyper-surface $\partial S^{(n)}(\bar{I}|r)$.
For a flat manifold, the invariant area $\Sigma_{flat}(r|\theta)$ is
given by:
\begin{equation}\label{flat.area}
\Sigma_{flat}(r|\theta)=\frac{\sqrt{\pi}r^{n-1}}{2^{\frac{n-1}{2}}\Gamma\left(\frac{n}{2}\right)},
\end{equation}
which is the area of an n-dimensional Euclidean sphere of radio $r$
normalized by the factor $(2\pi)^{(n-1)/2}$.
Eq.(\ref{normalization}) can be rewritten as follows:
\begin{equation}\label{Zflat}
\mathcal{Z}_{g}(\theta)=\frac{1}{\sqrt{2\pi}}\int^{+\infty}_{0}\exp\left(-\frac{1}{2}r^{2}\right)\sigma(r|\theta)\Sigma_{flat}(r|\theta)dr
\end{equation}
introducing the geometric rate $\sigma(r|\theta)$:
\begin{equation}\label{rate}
\sigma(r|\theta)=\Sigma_{g}(r|\theta)/\Sigma_{flat}(r|\theta),
\end{equation}
which characterizes how much differ the hyper-surface area of the
sphere $S^{(s)}(\bar{I},r)$ of the manifold $\mathcal{M}_{\theta}$
due to its intrinsic curvature. Since the gaussian partition
function $\mathcal{Z}_{g}(\theta)=1$ for the case of the
n-dimensional Euclidean space $\mathbb{R}^{n}$, a non-vanishing
gaussian Planck potential $\mathcal{P}_{g}(\theta)$ is a global
invariant measure of the intrinsic curvature of the manifold
$\mathcal{M}_{\theta}$.
\end{proof}
\begin{remark}
According to Eq.(\ref{estimation}) and the \textbf{corollary
\ref{entropy.planck}}, gaussian approximation (\ref{Gaussian})
employed in Ruppeiner geometry of thermodynamics \cite{Ruppeiner1}
only accounts for the local Euclidean properties of the Riemannian
manifold $\mathcal{M}_{\theta}$ at the point $\bar{I}$ with maximum
scalar entropy $\mathcal{S}_{g}\left( I|\theta \right) $.
Consequently, the intrinsic curvature of the manifold
$\mathcal{M}_{\theta}$ can be only manifested analyzing the system
fluctuating behavior beyond the Gaussian approximation
(\ref{Gaussian}).
\end{remark}
\begin{theorem}
For a sufficiently large short-range interacting thermodynamic
system, the gaussian Planck potential $\mathcal{P}_{g}(\theta)$ can
be estimated as follows:
\begin{equation}  \label{eggrenium}
\mathcal{P}_{g}(\theta)=\frac{1}{6}R(\bar{I}|\theta)+O(1/\Lambda),
\end{equation}
where $R(\bar{I}|\theta)$ is the statistical scalar curvature
(\ref{scalar.curvature}) evaluated at the point $\bar{I}$ with
maximum scalar entropy $\mathcal{S}_{g}(I|\theta)$, while $\Lambda$
is the characteristic size parameter of the system.
\end{theorem}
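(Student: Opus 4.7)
The plan is to start from the asymptotic distribution (\ref{asymptotic}) already established in the previous proof, impose the normalization $\int dp = 1$ to identify $\mathcal{Z}_{g}(\theta)$, and then take $-\log$.

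Imposing $\int dp(r,q|\theta) = 1$ on Eq.(\ref{asymptotic}) and using the independent normalizations of $d\varrho(r)$ and $d\Omega(q)$, one obtains
\[
\mathcal{Z}_{g}(\theta) = 1 - \frac{1}{24}\langle r^{2}\rangle_{d\varrho}\,\langle \mathcal{F}(q|\theta)\rangle_{d\Omega} + \cdots,
\]
where the ellipsis stands for the contributions of the $O(r^{4})$ remainder. Two inputs remain: the second moment $\langle r^{2}\rangle_{d\varrho} = n$ of the chi distribution, and the angular average of the spherical function (\ref{spherical}).

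For the latter, I would use the isotropy identity $\int d\Omega(q)\,e^{i}(q) e^{j}(q) = \bar{g}^{ij}/n$ following from rotational invariance of the unit-sphere measure. A direct expansion of $X^{ij}_{\alpha}X^{kl}_{\beta}$ combined with the projector identity $\kappa^{\alpha\beta}\xi^{j}_{\alpha}\xi^{l}_{\beta} = \bar{g}^{jl} - e^{j}e^{l}$ yields
\[
\int d\Omega(q)\,\kappa^{\alpha\beta}(q)X^{ij}_{\alpha}(q)X^{kl}_{\beta}(q) = \frac{2}{n}\bigl(\bar{g}^{ik}\bar{g}^{jl}-\bar{g}^{il}\bar{g}^{jk}\bigr),
\]
and contracting with $\bar{R}_{ijkl}$ while invoking the Riemann symmetries $\bar{R}_{ijkl}\bar{g}^{ik}\bar{g}^{jl} = R(\bar{I}|\theta)$ and $\bar{R}_{ijkl}\bar{g}^{il}\bar{g}^{jk} = -R(\bar{I}|\theta)$ gives $\langle \mathcal{F}(q|\theta)\rangle_{d\Omega} = 4R(\bar{I}|\theta)/n$. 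Substituting back, $\mathcal{Z}_{g}(\theta) = 1 - R(\bar{I}|\theta)/6 + \cdots$, so that the Taylor expansion of $\mathcal{P}_{g}(\theta) = -\log\mathcal{Z}_{g}(\theta)$ produces the announced identity $\mathcal{P}_{g}(\theta) = R(\bar{I}|\theta)/6 + \cdots$.

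The main obstacle is to justify that the omitted contributions are all of order $1/\Lambda$ or smaller. The $O(r^{4})$ remainder in (\ref{asymptotic}) involves quadratic curvature invariants $R_{ijkl}R^{ijkl}$, $R_{ij}R^{ij}$, $R^{2}$ and covariant derivatives such as $\Delta R$ evaluated at $\bar{I}$. Under the thermodynamic scaling already invoked in Corollary~2, one has $R \sim 1/\Lambda$, whereas each of these higher invariants carries at least one additional factor $1/\Lambda$, and the Gaussian moments $\langle r^{2k}\rangle \sim n^{k}$ do not compensate this suppression. The quadratic term $(R/6)^{2}/2$ produced by the logarithm is likewise $O(1/\Lambda^{2})$, so all corrections are consistent with the stated error $O(1/\Lambda)$.
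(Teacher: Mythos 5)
Your proposal is correct and lands on the same underlying calculation as the paper: expand the spherical representation of $dp$ around the flat case, identify $\mathcal{Z}_{g}(\theta)\simeq 1-\frac{1}{6}R(\bar{I}|\theta)$, and take $-\log$. The difference is in how the key coefficient is obtained. The paper works with the area ratio $\sigma(r|\theta)=\Sigma_{g}/\Sigma_{flat}$ in Eq.(\ref{Zflat}) and simply cites from Riemannian geometry the small-geodesic-sphere asymptotics $\sigma(r|\theta)=1-\frac{1}{6n}R(\bar{I}|\theta)r^{2}+\dots$ (Eq.(\ref{asymp})), supplementing this with the scaling argument $R\sim 1/\Lambda$; you instead re-derive that coefficient from the in-paper expansion (\ref{asymptotic})--(\ref{spherical}) by an explicit isotropic average, using $\int d\Omega\, e^{i}e^{j}=\bar{g}^{ij}/n$, the projector identity $\kappa^{\alpha\beta}\xi^{j}_{\alpha}\xi^{l}_{\beta}=\bar{g}^{jl}-e^{j}e^{l}$ and $\langle r^{2}\rangle_{d\varrho}=n$, which indeed gives $\frac{1}{24}\langle r^{2}\rangle\langle\mathcal{F}\rangle=\frac{1}{6}R(\bar{I}|\theta)$ and amounts to proving the cited Bertrand--Diguet--Puiseux-type formula rather than invoking it. You are also more explicit than the paper about the remainder: the $O(r^{4})$ terms carry quadratic curvature invariants and $\Delta R$, which under the scaling $R_{ijkl}\sim1/\Lambda^{3}$, $g^{ij}\sim\Lambda$ are $O(1/\Lambda^{2})$ for fixed $n$, as is the $R^{2}$ term from the logarithm (note that both arguments tacitly require $n$ fixed while $\Lambda\to\infty$, since the radial weight concentrates at $r\sim\sqrt{n}$). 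One small caveat: your contraction identities $\bar{R}_{ijkl}\bar{g}^{ik}\bar{g}^{jl}=R$ and $\bar{R}_{ijkl}\bar{g}^{il}\bar{g}^{jk}=-R$ follow the standard sign conventions consistent with Eq.(\ref{curvature.2}); the paper's stated index placements in Eqs.(\ref{curvature}) and (\ref{scalar.curvature}) are not fully consistent with that on signs, so it is worth stating which convention you use — this does not affect the final result (\ref{eggrenium}).
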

\begin{proof}
Assuming that scalar entropy $\mathcal{S}_{g}(I|\theta)$ and
macroscopic observables $I$ exhibits an extensive growth with the
size parameter $\Lambda$, it is easy to check the following scaling
dependencies:
\begin{equation}
\left.%
\begin{array}{c}
  \mathcal{S}_{g}\sim \Lambda \\
  I\sim \Lambda\\
\end{array}%
\right\}\Rightarrow
\left\{%
\begin{array}{c}
  g_{ij}\sim1/\Lambda,g^{ij}\sim\Lambda, \\
  \Gamma^{k}_{ij}\sim1/\Lambda,
R_{ijkl}\sim1/\Lambda^{3},R\sim1/\Lambda. \\
\end{array}%
\right.
\end{equation}
Thus, the curvature scalar $R(\bar{I}|\theta)$ constitutes a size
effect of order $1/\Lambda$ in the fluctuating behavior associated
with the distribution function (\ref{EinsteinPostulate}). According
to Riemannian geometry \cite{Berger}, the asymptotic expression of
the geometric rate $\sigma(r|\theta)$ for a small sphere can be
expressed in terms of the curvature scalar $R(\bar{I}|\theta)$ at
the center of the n-dimensional sphere $S^{(n)}(\bar{I},r)$ as
follows:
\begin{equation}\label{asymp}
\sigma(r|\theta)=1-\frac{1}{6n}R(\bar{I}|\theta)r^{2}+O(r^{2}),
\end{equation}
which leads to the desirable result (\ref{eggrenium}) performing the
integration of Eq.(\ref{Zflat}). It is worth to remark that the
estimation formula (\ref{eggrenium}) is applicable whenever the
curvature scalar $R(\bar{I}|\theta)$ be sufficiently small. This
observation suggests its possible applicability beyond the framework
of the short-range interacting systems whenever the gaussian
approximation (\ref{Gaussian}) turns a licit treatment to describe
the statistical behavior of the macroscopic observables $I$.
\end{proof}

\section{Final remarks}

As already evidenced, Riemannian formulation of classical
statistical mechanics arises as a suitable extension of Ruppeiner
geometry of thermodynamics. A main consequence of this proposal is
the interpretation of the scalar entropy $S_{g}(I|\theta)$ and its
associated thermo-statistical quantities as purely geometric notions
derived from the Riemannian structure of the manifold of macroscopic
observables $\mathcal{M}_{\theta}$. Besides, the non-vanishing of
the statistical curvature tensor $R_{ijkl}(I|\theta)$ of the
manifold $\mathcal{M}_{\theta}$ constitutes a direct indicator about
the existence of \textit{irreducible statistical dependence} for the
equilibrium distribution function $dp(I|\theta)$ associated with the
covariant redefinition of the Einstein postulate
(\ref{EinsteinPostulate}).

At first glance, thermo-statistics geometry shares some analogies
with Einstein gravitation theory. In particular, the notion of
\textit{statistical correlations} could be regarded as the
statistical counterpart of the concept of \textit{interaction}.
Thus, the relation between statistical curvature tensor
$R_{ijkl}(I|\theta)$ and the existence of irreducible statistical
dependence for the equilibrium distribution function
(\ref{EinsteinPostulate}) is analogous to the relation between the
spacetime curvature tensor $R_{ijkl}(t,\mathbf{r})$ and the
irreducible character of gravitation (while the existence of
inertial forces can be avoided with an appropriate selection of the
reference frame, gravitational forces are irreducible because of
this universal interaction is just the manifestation of the
spacetime curvature).

Before to end this section, it is worth to comment some open
questions that deserve a detailed analysis in forthcoming
contributions. The present work inspires a geometric reformulation
of classical fluctuation theory, which should lead to the derivation
of a certain set of \textit{covariant fluctuation theorems} and
\textit{dynamical equations} to describe the system relaxation.
Similarly, it is worth to analyze the implications of
thermo-statistics geometry on the study of \textit{phase transition
phenomena} as well as its possible extension to the framework of
quantum distribution functions. According to the relationship
between statistical curvature and the existence of irreducible
statistical dependence for the distribution functions, an extension
of the present proposal to quantum statistical distributions should
provide an alternative characterization for the notion of
\textit{quantum entanglement}.

\section*{Acknowledgement}

The present research has been developed with the financial support
of CONICYT/Programa Bicentenario de Ciencia y Tecnolog\'{\i}a PSD
\textbf{65} (Chilean agency).

\section*{References}

\end{document}